\definecolor{darkgreen}{rgb}{0.1, 0.6, 0.3}
\definecolor{verydarkgreen}{rgb}{0,0.3,0}
\definecolor{darkblue}{rgb}{0,0,0.6}
\definecolor{lightgray}{rgb}{0.8,0.8,0.8}
\definecolor{darkred}{rgb}{0.75,0,0}
\definecolor{verydarkred}{rgb}{0.5,0,0}
\newtheorem{theorem}{Theorem}
\newtheorem{lemma}[theorem]{Lemma}
\newtheorem{proposition}[theorem]{Proposition}
\newtheorem{corollary}[theorem]{Corollary}
\theoremstyle{definition}
\newtheorem{definition}[theorem]{Definition}
\newcommand{\dom}{\mathrm{dom}}
\def\REV#1{\textcolor{blue}{#1}}
\def\REV#1{\textcolor{black}{#1}}
\begin{document}

\makeatletter
\let\@fnsymbol\@arabic
\makeatother

\title{Vehicle Routing with Time-Dependent Travel Times: \\[1mm] Theory, Practice, and Benchmarks}
\author{Jannis Blauth\thanks{Research Institute for Discrete Mathematics and Hausdorff Center for Mathematics, University of Bonn.}, Stephan Held\footnotemark[1], Dirk M\"uller\footnotemark[1], Niklas Schlomberg\footnotemark[1],\\[1mm]
Vera Traub\footnotemark[1]~\thanks{Supported by the Swiss National Science Foundation grant 200021\_184622.}, Thorben Tr\"obst\thanks{University of California, Irvine.}, Jens Vygen\footnotemark[1]}

\date{}

\maketitle

 \begin{abstract}
 We develop theoretical foundations and practical algorithms for vehicle routing with time-dependent travel times.
We also provide new benchmark instances and experimental results.

First, we study basic operations on piecewise linear arrival time functions.
In particular, we devise a faster algorithm to compute the pointwise minimum of a set of piecewise linear functions
and a monotonicity-preserving variant of the Imai--Iri algorithm to approximate an arrival time function with fewer breakpoints.

Next, we show how to evaluate insertion and deletion operations in tours efficiently and update the underlying data structure faster than previously known when a tour changes.
Evaluating a tour also requires a scheduling step which is non-trivial in the presence of time windows and time-dependent travel times. We show how to perform this in linear time.

Based on these results, we develop a local search heuristic to solve real-world vehicle routing problems with various constraints efficiently and report experimental results on classical benchmarks.
Since most of these do not have time-dependent travel times, we generate and publish new benchmark instances that are based on real-world data.
This data also demonstrates the importance of considering time-dependent travel times in instances with tight time windows.

\ifbool{journal}
{
\medskip\noindent
\textbf{Keywords:} vehicle routing, time-dependent travel times, piecewise linear functions, data structures
}{}

\end{abstract}

\ifbool{journal}
{
\subsection*{Declaration}
Declarations of interest: none

\subsection*{Acknowledgement}
We thank all students who contributed to the implementation, in particular Luise Puhlmann and Silas Rathke. We also thank our cooperation partner Greenplan, in particular Clemens Beckmann, Karin Pientka, and Jannik Silvanus.

We used map data copyrighted by OpenStreetMap contributors and available from \url{https://www.openstreetmap.org}, and
data retrieved from Uber Movement, \copyright\,2022 Uber Technologies, Inc., \url{https://movement.uber.com}.
}{}

 \section{Introduction}

In this paper we describe algorithms for vehicle routing problems with time-dependent travel times.
It is well-known that the driving times in many cities vary a lot during the day.
Nevertheless, the majority of the vehicle routing algorithms -- both academic and industrial -- still work with fixed travel times.
While this might often lead to acceptable results when there are not too many time windows, it inevitably leads to poor or infeasible results when many pickups or deliveries must take place in certain time windows.

\subsection{Problem description}

We work with a very general model, where  time-dependent travel times are modeled by piecewise linear \emph{arrival time functions}.
An arrival time function $a$ describes the arrival time $a(t)$ depending on the departure time $t$; then $a(t) - t$ is the \emph{travel time}.
Our model fulfills the monotonicity property known as the \emph{first-in-first-out principle}, which says that if we start traveling at a later time, we will not arrive earlier.
This is the most commonly used model for time-dependent travel times; see, e.g., \cite{ichoua2003vehicle}.
For a precise description of this model, see Section~\ref{sec:atf}.

 An \emph{instance} of (a simple version of) the vehicle routing problem with pickup and delivery and time windows
 consists of a list of items, each associated with a pickup address, a pickup time window, a pickup duration, a delivery address,
 a delivery time window, and a delivery duration.
 Moreover, an instance contains a road network (see Section \ref{sec:atf}) and an address where tours start and end.

 The task is to compute a set of tours such that every item appears in a tour.
 A \emph{tour} consists of a tour sequence and a feasible schedule.

A \emph{tour sequence} is a sequence $s_1,\dots, s_k$ of actions, beginning with $s_1=$ \textsc{start}, ending with  $s_k=$ \textsc{stop},
 where each action $s_i$ ($1< i < k$) is either the pickup of an item or the delivery of an item,
 such that every item that appears in an action of the tour is first picked up and then delivered.
 Each item is part of exactly two actions, the pickup at its pickup address and the delivery at its delivery address.

A \emph{schedule} of a tour assigns a start time to every action in the tour such that starting action $s_i$ at time $t_i$, performing it and then
traveling to the location of $s_{i+1}$ takes at most until $t_{i+1}$ for $1 \le i \le k-1$.
A schedule is \emph{feasible} if each action is performed in its time window.

Our model implicitly allows that vehicles can wait at  a pickup or delivery address until the time window opens.
In practice there are further constraints, for instance vehicle capacities, work time limits, and many more.
Moreover, the total cost can depend on which vehicles we use, the duration of the tours, and more.
 We will comment on how to incorporate all this when describing our overall algorithm in Section~\ref{sec:our_vrp_algorithm}.
 The simple model is sufficient to describe our main techniques.

\subsection{Outline and contributions}

We have developed fast algorithms that guarantee to obey time window constraints with respect to travel times that depend on the time of the day, while considering all major constraints that are relevant in practice.
While our overall approach is heuristic, it is based on solving certain well-defined subproblems optimally.

First, in Section~\ref{sec:atf}, we describe how we model time-dependent travel times using arrival time functions and give a thorough theoretical analysis of several basic operations which we need in our algorithm.
In particular, we devise a new algorithm to compute the pointwise minimum of $n$ arrival time functions
(in fact arbitrary piecewise linear functions)
with a total of $m$ breakpoints in time $O(m\log n)$ (Theorem~\ref{thm:atf_minimum_computation}).
Moreover, we show that a small modification of the classic Imai--Iri algorithm for polygonal line simplification \cite{II86} preserves monotonicity of piecewise linear functions (Theorem~\ref{thm:simplifying_atfs}) and can therefore be applied directly to our arrival time functions.
This shows that the post-processing step used in \cite{Batz} is obsolete.

Next, in Section~\ref{sec:upd-tours}, we discuss techniques for handling time-dependent travel times in key ingredients of local-search based algorithms for vehicle routing.
Specifically, we first describe the data structures we use to efficiently evaluate insertions (and removals) of actions (or sequences of actions) in tours.
Our data structures generalize those proposed in~\cite{Visser}, but we achieve a faster running time by storing data for fewer segments
(Theorem~\ref{thm:bst_data_structure_iterative}) and by lazy updates.
Then we give an exact algorithm for tour scheduling, i.e., for computing an optimal schedule for a given tour sequence.
While our cost model is slightly less general than the one considered in \cite{HASHIMOTO2008434}, we achieve a significantly faster running time.
In particular, our running time is linear in the number of breakpoints of the arrival time function.
We can also model penalties for arriving late in a time window in order to make schedules more robust.

In Section~\ref{sec:our_vrp_algorithm} we explain how we combine these components to obtain our overall local search algorithm.
We start by preprocessing the road network and computing a contraction hierarchy \cite{Batz, Geisberger2010engineering} to allow for efficient fastest-path queries.
Next, we construct tours using a variant of the average-regret heuristic \cite{FoisyPotvin}.
During and after constructing tours, we apply various local search operations, some of which seem to be new.
Our algorithm can handle many constraints, including heterogeneous fleets, hard and soft time windows, and various cost models.

We present experimental results in Section~\ref{sec:experimental}. Although we have designed our algorithm to
handle time-dependent travel times and many practical constraints, it yields only slightly worse results on standard
benchmarks with fixed travel times than algorithms that were designed to perform best on such instances.
We remark that our algorithm is being used successfully in various practical scenarios, in cooperation with the
startup company Greenplan\footnote{\url{www.greenplan.de}}.

Since most of the publicly available benchmarks have fixed travel times, but time-dependent travel times are necessary to evaluate our algorithm (and other algorithms in the future), we generate and publish new benchmarks based on real-world data in major cities.
Our results on these new benchmarks demonstrate once again the importance of considering time-dependent travel times in vehicle routing algorithms.
We obtain more reliable tours while avoiding higher costs that would result from working with pessimistic travel times.

\subsection{Related work}

All kinds of vehicle routing problems include the traveling salesman problem and are therefore APX-hard.
Approximation algorithms exist only for some restricted models (e.g., \cite{blauth2023improving, friggstad2021improved, Bansal, chekuri2012improved,nagarajan2011directed}),
and they are --- as of today --- not suitable for practical purposes.
There is a large body of work on mixed-integer programming models.
Recent works (e.g., \cite{Pessoa}) can solve instances with several hundred customers optimally.
For larger instances, heuristics are used in practice.
In particular, many variants of local search techniques have been proposed; see, e.g., \cite{Vidal2013cvrptw, arnold2019knowledge, AccorsiVigo:2020, QueirogaSadykovUchoa:2021,Vidal2022cvrp}.
A general overview on various approaches has been given by Toth and Vigo \cite{toth2014vehicle}.

Most works considered fixed travel times between any pair of points.
Of course, travel times in road networks vary substantially, in particular in cities and in rush hours.
To model this, time-dependent travel times were already considered in 1992 by
Malandraki and Daskin \cite{MalandrakiDaskin}, with piecewise constant travel times.
By now the most common model works with piecewise constant speeds, which leads to piecewise linear travel time functions \cite{ichoua2003vehicle}.

While some papers assume a given travel time function (or equivalently arrival
time function) for any pair of addresses \cite{donati2008time,
  figliozzi2012time, dabia2013branch, pan2021hybrid, pan2021multi},
others start with a road network where travel time functions are given for all edges of the graph \cite{Mancini,HUANG2017169, ticha2017vehicle, ben2019branch,     Gmira}. So do we.

A survey on vehicle routing with time-dependent travel times has been given by
Gendreau, Ghiani, and Guerriero \cite{gendreau2015time}.

Batz, Geisberger, Sanders, and Vetter~\cite{Batz} showed how to preprocess a road network with time-dependent travel times in order to allow relatively fast shortest path queries.
A less memory-consuming variant was proposed by Strasser, Wagner, and Zeitz \cite{Strasser}. While a contraction hierarchy needs to be recomputed only when the road network changes,
Geisberger and Sanders \cite{Geisberger2010engineering} proposed an additional instance-specific preprocessing to speed up fastest-path queries.

Computing the arrival time function from one address to the other based on the data from the road network requires certain operations on arrival time functions, in particular composition and pointwise minimum.
The same operations are also needed when composing a tour or making local changes.
While the number of breakpoints grows linearly when composing arrival time functions, this is not true for the piecewise minimum \cite{Wiernik86,HS86}.
The resulting large number of breakpoints suggests simplifying arrival time functions.
Imai and Iri showed how to approximate a continuous piecewise linear function optimally with a guaranteed error bound in linear time \cite{II86}.

Any local search based algorithm needs operations to insert items into or remove items from a tour.
Even more frequently the algorithm will evaluate the effect of such operations without actually performing them.
Even the efficient evaluation whether the insertion of an item would lead to violations of time windows is non-trivial
and requires the maintenance of data structures for fast updates.
Some papers, including \cite{Gmira}, precomputed the latest feasible departure time at any address.
If we have a fixed start time or also precompute the earliest possible arrival time at any address,
we can check feasibility with a constant number of function evaluations when inserting a sequence of constant length.
However, this does not allow to bound the tour duration (unless the start time is fixed)
or to compute the cost of the tour resulting from the insertion accurately. \cite{Gmira} used a heuristic.

The feasibility of an insert operation with respect to all time constraints
can easily be checked with $O(1)$ operations on arrival time functions
if we have precomputed the composed arrival time function
for each \emph{tour segment} (consecutive sequence of actions in a tour) \cite{fleischmann2004time,HASHIMOTO2008434}.
Based on this preprocessing, \cite{HASHIMOTO2008434} showed how to find the cheapest way
to insert an action into a tour in linear time
and how to implement certain local search operations efficiently.
However, a tour with $n$ actions has $\Theta(n^2)$ tour segments, and the update after an actual insertion in the middle
of the tour would require the recomputation of half of these tour segments.
Visser and Spliet \cite{Visser} proposed to store only $\Theta(n\log n)$ segments such that the information for
every segment can still be computed from these with a constant number of compositions.
We generalize this approach (and propose a faster update mechanism) and therefore describe it in more detail in Section~\ref{subsec:cheapest_insert}.

The evaluation of the cost of a tour (after insertion or deletion of an item) also requires a scheduling step,
in particular determining the optimal start time. 
Hashimoto, Yagiura and Ibaraki \cite{HASHIMOTO2008434} showed how to do this in quadratic time in the number of breakpoints
for a very general model, even relaxing the FIFO property of arrival time functions.

Several methods have been proposed for optimizing a single tour, including both the tour sequence and the schedule, under time window constraints and with time-dependent travel times.
Branch-and-bound algorithms have been proposed in \cite{arigliano2019time, adamo2020enhanced},
and branch-and-price algorithms have been developed in \cite{montero2017integer, lera2018integer, hansknecht2021dynamic}.
Vu, Hewitt, Boland, and Savelsbergh \cite{vu2018solving} gave a mixed-integer linear programming formulation
on a time-expanded network, which they then use within a dynamic discretization framework.
Malandraki and Dial \cite{malandraki1996restricted} described a dynamic programming algorithm for the time-dependent TSP
without time windows (working with piecewise constant travel times).
A dynamic programming algorithm for the general setting was developed by
Lera-Romero, Miranda-Bront, and Soulignac \cite{lera2020dynamic},
building on earlier work by Tilk and Irnich \cite{tilk2017dynamic}, who considered the TSP with time windows and constant travel times.

Many benchmarks for different kinds of vehicle routing problems have been published. However, most benchmarks consider constant travel times (e.g., \cite{uchoa}, \cite{arnold2019efficiently}, \cite{solomon}, \cite{homberger}). The dataset for capacitated vehicle routing by Uchoa et al.~\cite{uchoa} contains instances with up to 1000 customers. Even more recently, Arnold, Gendreau and Sörensen~\cite{arnold2019efficiently} provided very large-scale instances with up to 30,000 customers based on real-world data in Belgium.
Benchmark instances for vehicle routing with time windows and 100 customers were published by Solomon~\cite{solomon} and with up to 1000 customers by Gehring and Homberger~\cite{homberger}.
These are artificial instances based on  Euclidean distances and scattered time windows.

Due to the lack of benchmarks with time-dependent travel times, the Solomon
benchmarks have been modified by adding speed profiles to the arcs \cite{dabia2013branch}, \cite{figliozzi2012time}, \cite{donati2008time},
\cite{ichoua2003vehicle}. This resulted in instances with time-dependent
point-to-point driving times. The Euclidean distances are scaled  using speed distributions.
In \cite{figliozzi2012time} and \cite{ichoua2003vehicle} all arcs use the same
speed distribution, while \cite{dabia2013branch} and \cite{donati2008time} assign to each arc one out of a few different speed distributions.
Using average constant speeds for the planning,  \cite{donati2008time} and \cite{ichoua2003vehicle} report time window violations and a significant cost increase when evaluating such  solutions in the time-dependent model.
Similar results were observed on a real-world instance in
\cite{donati2008time}.

Gmira et al.~\cite {Gmira} and Ticha et al.~\cite{benticha} added time-dependent travel times to the
so-called NEWLET benchmarks \cite{ticha2017empirical}
for the vehicle routing problem with time windows and road network
information.
The road graphs are randomly generated sparse planar graphs with up to 200
vertices and randomly assigned speed profiles. The instances have up to 50
deliveries and are small enough to be tractable by a branch-cut-and-price
framework. Gaps but neither (optimum) solutions, solution values,  nor lower bounds have been reported.

 \section{Arrival time functions}\label{sec:atf}

 We are given a road network with travel time information that we transform into a directed graph $R$ with an arrival time function for every arc of
$R$.
 For a vehicle traveling through the arc $(v, w)$, the arrival time function $a$ describes the arrival time $a(t)$ at $w$ depending on the departure time $t$ at $v$.

 \begin{definition}\label{def:atf}
  An \emph{arrival time function (ATF)} is a piecewise linear function $a
: (-\infty, t_{\max}] \to \mathbb{R}$
  for some $t_{\max}\in\mathbb{R}$ such that
  \begin{enumerate}
   \item $a$ is continuous,\label{def:atf:c1}
   \item $a$ is non-decreasing,\label{def:atf:c2}
   \item $a(t) \ge t$ for all $t \le t_{\max}$, and\label{def:atf:c3}
   \item $a$ is initially constant, i.e., $a(t) = a(t_{\min})$ for some
$t_{\min}\le t_{\max}$ and all $t\le t_{\min}$.\label{def:atf:c4}
  \end{enumerate}
  We will also define $a(t) := \infty$ for $t> t_{\max}$. We write $\dom(a):=(-\infty, t_{\max}]$.
 \end{definition}

 This is the standard model to describe time-dependent travel times \cite{ichoua2003vehicle}.
 Each condition in Definition~\ref{def:atf} can be seen to hold for reasonable models of arrival times in road networks.
 For condition~\ref{def:atf:c4} we can simply set $t_{\min}$ to the beginning of the considered time period.
 Condition~\ref{def:atf:c3} says that travel times are always non-negative,
 and condition~\ref{def:atf:c2} (also called the \emph{first-in-first-out
principle}) simply follows from the fact that we could always wait in place before traversing an arc.
 Condition~\ref{def:atf:c1} might be the most contentious as there are some discontinuous events associated with road networks
 (e.g.\ the lowering of a bridge, the arrival of a train, an accident, etc.).
 However, in practice, arrival times are obtained via statistical sampling, and thus they rather model the \emph{expected} arrival times in our road network.
 These should always be continuous.

 On the other hand, the assumption that arrival time functions are piecewise linear is made
 so that we can represent ATFs via a sorted sequence of breakpoints, i.e., pairs $(t_i,a(t_i))$ for
 $i=1,\ldots,b$, where $b\ge 1$ and $t_1 = t_{\min}$, $t_b=t_{\max}$, and $t_i < t_{i+1}$ for $i=1,\ldots,b-1$
 (then we say that $a$ has $b$ \emph{breakpoints} and $b$ \emph{segments}).
 We call $(t_1,a(t_1)),\ldots,(t_{b-1},a(t_{b-1}))$ the \emph{inner} breakpoints of $a$.
 We assume that the slope changes at every inner breakpoint (otherwise it
is redundant).

 Arrival time functions can also be used to model time windows and the time needed for a pickup or a delivery.
 For example, if a delivery takes $0.1$ time units and must start between $4$ and $5$, the corresponding arrival time function is $a : (-\infty, 5] \to \mathbb{R}$ with $a(t) = \max\{4, t\} + 0.1$ for $t \le 5$.
 Thus we assume that we have ATFs not just for every arc in our road network but also for every action.

 \subsection{Basic operations}

 Representing time-dependent travel times by ATFs facilitates simple and efficient implementations of several key operations.
  Most of these operations are rather straight-forward and well-known; we will give concise descriptions for completeness.
 Similar algorithms are employed in previous works on time-dependent routing problems such as \cite{Visser}.
 However, we will devise a faster algorithm to compute the pointwise minimum.

 Any vehicle routing algorithm must know how long it takes a vehicle to get from a point $p$ to another point $q$.
 For any fixed path $P$ from $p$ to $q$ in $R$, we can obtain this information by composing the arrival time functions of the arcs of $P$.
 Similarly, if we perform an action $s$ (e.g., pickup or delivery) at the location $q$, we can obtain the ATF that models traveling along the path $P$ and performing $s$ by composing the ATF of $P$ with the ATF of $s$.
 Composing two ATFs is easy:

 \begin{proposition}\label{prop:composing_atfs}
  Given arrival time functions $a_1$ and $a_2$ with $b_1$ and $b_2$ breakpoints respectively,
  we can compute the composition $a_2 \circ a_1$ in $O(b_1 + b_2)$ time.
  Moreover, $a_2 \circ a_1$ is itself an arrival time function with at most $b_1 + b_2 - 1$ breakpoints.
 \end{proposition}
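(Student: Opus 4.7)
The plan is to compute $a := a_2 \circ a_1$ by a two-pointer merge over the breakpoint sequences of $a_1$ and $a_2$. Let the breakpoints of $a_1$ be $t_1<\dots<t_{b_1}$ and those of $a_2$ be $s_1<\dots<s_{b_2}$. On any segment $[t_i,t_{i+1}]$ of $a_1$, the function $a_1$ is affine with some slope $\sigma_i$, so $a = a_2\circ a_1$ is affine on any subinterval on which $a_2$ is also affine, and a slope change of $a$ in $(t_i,t_{i+1})$ happens exactly at points $t$ with $a_1(t)=s_j$ for some inner breakpoint $s_j$ of $a_2$. Therefore the candidate breakpoints of $a$ are of two kinds: (i) the $t_i$, and (ii) the preimages $a_1^{-1}(s_j)$ that lie in $\dom(a)$.

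The algorithm processes these in increasing order of $t$. First, handle the prefix $t\le t_{\min,1}$: here $a_1$ is constant with value $a_1(t_{\min,1})$, so $a$ is constant as well, and we can discard all $s_j<a_1(t_{\min,1})$ by advancing $j$. Then, sweeping left to right, for each $i$ in turn we first emit every breakpoint at $a_1^{-1}(s_j) = t_i+(s_j-a_1(t_i))/\sigma_i$ with value $a_2(s_j)$ while $s_j\le a_1(t_{i+1})$, incrementing $j$ after each, and then emit the breakpoint $(t_{i+1},a_2(a_1(t_{i+1})))$. If at some point $a_1(t_{i+1})>t_{\max,2}$, we truncate: the last breakpoint sits at $a_1^{-1}(t_{\max,2})$ and becomes $t_{\max}$ of $a$. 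Each step either advances $i$ or $j$, giving total time $O(b_1+b_2)$ and linearly many emitted pairs.

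For the breakpoint count, first note that at most $b_1$ breakpoints arise from source (i) and at most $b_2-1$ from source (ii) inside $\dom(a)$, since only inner breakpoints of $a_2$ can contribute. The claimed bound $b_1+b_2-1$ then follows from a small argument at the left boundary: either $s_1=t_{\min,2}\le a_1(t_{\min,1})$, in which case no breakpoint of source (ii) is generated from $s_1$, losing one from the $b_2$; or $s_1>a_1(t_{\min,1})$, in which case $a$ remains constant on an interval strictly to the right of $t_{\min,1}$, so the slope does not change at $t_{\min,1}$, and this breakpoint of $a_1$ is redundant and can be dropped without changing $a$ (the first breakpoint of $a$ is then $a_1^{-1}(s_1)$). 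In either case the net count is at most $b_1+b_2-1$, and after removing any further redundant breakpoints where slopes happen to agree, the produced representation satisfies the requirements of Definition~\ref{def:atf}.

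The main subtlety, and the only point requiring care, is this boundary bookkeeping and the symmetric truncation at $t_{\max}$; once those are handled the monotonicity of $a_2\circ a_1$ and the FIFO and constancy conditions are inherited directly from $a_1$ and $a_2$, so $a$ is again an ATF.
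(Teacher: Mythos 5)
Your proposal is correct and takes essentially the same approach as the paper: a linear-time two-pointer sweep through $\dom(a_1)$ exploiting the monotonicity of $a_1$, with the breakpoint bound obtained by attributing each breakpoint of $a_2\circ a_1$ to a distinct breakpoint of $a_1$ or $a_2$. The paper's version is terser --- it just observes that each inner breakpoint of the composition originates from a distinct inner breakpoint of $a_1$ or $a_2$, giving $(b_1-1)+(b_2-1)$ inner breakpoints plus the right endpoint --- and your additional boundary bookkeeping is consistent with that count.
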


 \begin{proof}
  The function $a_2 \circ a_1$ can be computed using a simple sweep through $\dom(a_1)$.
  Note that in general this only gives us an $O(b_1 \cdot b_2)$ time algorithm.
  However, since $a_1$ is non-decreasing, when we sweep $t$ through $\dom(a_1)$, $a_1(t)$ sweeps through $\dom(a_2)$ only once.
  Hence, only $O(b_1 + b_2)$ time is needed to perform the sweep.
  Moreover, this also shows that $a_2 \circ a_1$ has at most $(b_1-1) + (b_2-1)$ inner breakpoints
  since each inner breakpoint of the composition must originate from a distinct inner breakpoint of $a_1$ or $a_2$.
 \end{proof}

For composing a sequence of ATFs we have the following.

 \begin{corollary}\label{cor:composing_atfs}
  Given arrival time functions $a_1, \ldots, a_k$ such that $a_i$ has $b_i$ breakpoints ($i=1,\ldots,k$),
  we can compute $a_k \circ \dots \circ a_1$ in $O((\log{k}) \sum_{i = 1}^{k}{b_i})$ time.
  Moreover, the resulting ATF has at most $1+\sum_{i = 1}^{k}{(b_i-1)}$
breakpoints.
 \end{corollary}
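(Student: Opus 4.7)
The plan is to compose the functions not sequentially but via a balanced binary tree (divide and conquer), invoking Proposition~\ref{prop:composing_atfs} at every internal node.

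More precisely, partition $a_1,\ldots,a_k$ into two halves of size $\lceil k/2\rceil$ and $\lfloor k/2\rfloor$, recursively compute $A_L := a_{\lceil k/2 \rceil}\circ\cdots\circ a_1$ and $A_R := a_k\circ\cdots\circ a_{\lceil k/2\rceil+1}$, and then return $A_R\circ A_L$, computed by a single call to Proposition~\ref{prop:composing_atfs}. Associativity of composition ensures that the output equals $a_k\circ\cdots\circ a_1$.

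For the breakpoint bound I would argue by induction on $k$. The base case $k=1$ is trivial. For the inductive step, if $A_L$ has at most $1+\sum_{i=1}^{\lceil k/2\rceil}(b_i-1)$ breakpoints and $A_R$ has at most $1+\sum_{i=\lceil k/2\rceil+1}^{k}(b_i-1)$ breakpoints, then by Proposition~\ref{prop:composing_atfs} the composition $A_R\circ A_L$ has at most
\[
\Bigl(1+\sum_{i=1}^{\lceil k/2\rceil}(b_i-1)\Bigr)+\Bigl(1+\sum_{i=\lceil k/2\rceil+1}^{k}(b_i-1)\Bigr)-1 \;=\; 1+\sum_{i=1}^{k}(b_i-1)
\]
breakpoints, which is exactly the claimed bound. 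In particular, the same bound holds at every node of the recursion tree applied to the ATFs below it.

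For the running time I would analyze the recursion level by level. At level $\ell$ of the tree (counting the leaves as level $0$), the ATFs stored at that level form a partition of $\{a_1,\ldots,a_k\}$ into consecutive blocks, and by the breakpoint bound just shown, the total number of breakpoints over all ATFs at level $\ell$ is at most $\sum_{i=1}^{k} b_i$ (each block of $j$ original ATFs contributes at most $1+\sum(b_i-1)\le\sum b_i$, and these bounds sum to at most $\sum b_i$ across all blocks because the $-1$'s cancel the $+1$'s). The compositions performed to go from level $\ell$ to level $\ell+1$ process each level-$\ell$ ATF exactly once, so by Proposition~\ref{prop:composing_atfs} the total time spent at that level is $O\bigl(\sum_{i=1}^{k} b_i\bigr)$. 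Since the tree has depth $O(\log k)$, the overall running time is $O\bigl((\log k)\sum_{i=1}^{k} b_i\bigr)$, as desired. The only mildly delicate point is the level-wise breakpoint accounting, but once the inductive bound is in hand this is immediate.
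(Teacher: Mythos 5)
Your proposal is correct and follows essentially the same route as the paper: a balanced divide-and-conquer using associativity, with Proposition~\ref{prop:composing_atfs} at each internal node, the breakpoint bound by induction (the $-1$ in $b_1+b_2-1$ telescoping across merges), and the running time obtained by observing that each original breakpoint participates in $O(\log k)$ compositions. The paper's proof is just a more compressed statement of the same level-wise accounting you spell out.
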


 \begin{proof}
  By induction on $k$.
  Using the associativity of functional composition, we can apply the induction hypothesis
  to compute $a_k \circ \dots \circ a_{\lceil\frac{k}{2}\rceil + 1}$ and $a_{\lceil\frac{k}{2}\rceil} \circ \dots \circ a_1$, and then apply Proposition~\ref{prop:composing_atfs}.
  Every breakpoint of an original ATF $a_i$ takes part in $O(\log k)$ many compositions.
 \end{proof}

 For a fixed path $P$, we denote the composition of the ATFs of its arcs by $a_P$.
 Of course, between two given points $p$ and $q$, different paths can be optimal at different times of the day. See Figure~\ref{fig:alternative_fastest_paths} on page~\pageref{fig:alternative_fastest_paths} for an example.
 Thus, the optimal arrival time between $p$ and $q$ is modeled by the function
 \[
  a_{p, q}(t) \coloneqq \min \{a_P(t) : P \text{ is a $p$-$q$-path}\}.
 \]
 The pointwise minimum of multiple ATFs can also be computed efficiently as we will see shortly.

 However, a road network can contain exponentially many paths between two
points $p$ and~$q$.
 Foschini, Hershberger and Suri \cite{FHS14} showed that even if the arc ATFs are all affine,
 the function $a_{p, q}$ may have $n^{\Omega(\log{n})}$ breakpoints,
 where $n$ is the number of vertices of the digraph~$R$.

 Thus, an exact computation of the relevant ATFs may be practically infeasible.
 We preprocess the road network using contraction hierarchies (cf.\ Section~\ref{sec:preprocessing}) and approximate ATFs where necessary.
Any such approach relies crucially on the ability to compute the pointwise minimum of arrival time functions very efficiently.

 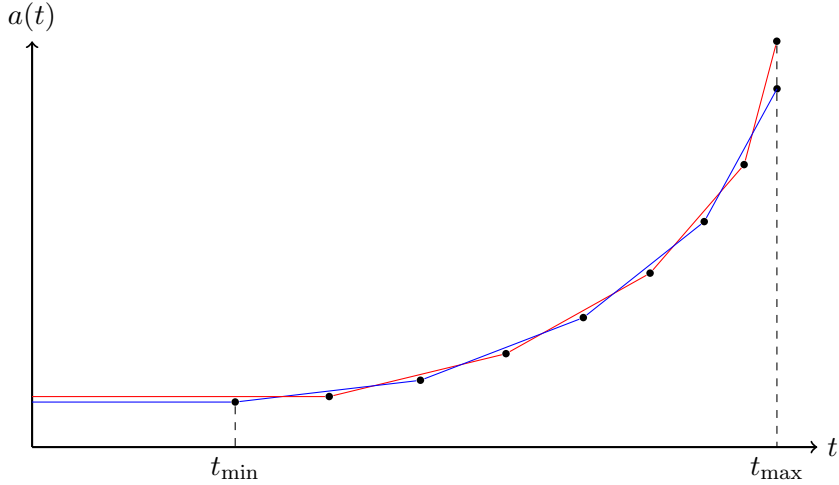
\begin{figure}[tb]
  \begin{center}
   \begin{tikzpicture}[yscale=1.2, xscale=1.8]
    \foreach \theta [count=\i] in {-80, -60, ..., 0} {
     \node[circle, fill, inner sep=1pt] (a\i) at (\theta:4) {};
    }
    \foreach \theta [count=\i] in {-90, -70, ..., -30} {
     \node[circle, fill, inner sep=1pt] (b\i) at (\theta:4) {};
    }

     \node[circle, fill, inner sep=1pt, opacity=0] (b5) at (-10:4) {};

    \foreach \i [count=\j from 2] in {1, ..., 4} {
     \draw[red, -] (a\i) -- (a\j);
    }

    \foreach \i [count=\j from 2] in {1, ..., 3} {
     \draw[blue, -] (b\i) -- (b\j);
    }

    \draw[thick, ->] (-1.5, -4.5) -- (4.3, -4.5);
    \draw[thick, ->] (-1.5, -4.5) -- (-1.5, 0);

    \draw[blue, -] ($(-1.5, -4.5)!(b1)!(-1.5, 0)$) -- (b1);
    \draw[red, -] ($(-1.5, -4.5)!(a1)!(-1.5, 0)$) -- (a1);

    \coordinate (tmin) at ($(-1.5, -4.5)!(b1)!(4.3, -4.5)$);
    \coordinate (tmax) at ($(-1.5, -4.5)!(a5)!(4.3, -4.5)$);

    \coordinate (i1) at (intersection of tmax--a5 and b4--b5);
    \node[circle, fill, inner sep=1pt] at (i1) {};
     \draw[blue, -] (b4) -- (i1);

    \draw[-, dashed] (b1) -- (tmin);
    \draw[-, dashed] (a5) -- (tmax);

    \node[below] at (tmin) {$t_{\min}$};
    \node[below] at (tmax) {$t_{\max}$};

    \node[right] at (4.3, -4.5) {$t$};
    \node[above] at (-1.5, 0) {$a(t)$};
   \end{tikzpicture}
  \end{center}
  \caption{Taking the minimum of two ATFs may increase the number of inner breakpoints by a factor of 2.
   \label{fig:atf_min_growth}
  }
 \end{figure}

 \begin{proposition}\label{prop:minimum_atfs}
  Given arrival time functions $a_1$ and $a_2$ with $b_1$ and $b_2$ breakpoints respectively,
  we can compute the pointwise minimum $\min \{a_1, a_2\}$ in $O(b_1 + b_2)$ time.
  If $\dom(a_1) = \dom(a_2)$, then $\min \{a_1, a_2\}$ is itself an arrival time function with at most $2(b_1 + b_2)-3$ breakpoints.
 \end{proposition}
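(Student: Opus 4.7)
The plan is to compute $\min\{a_1, a_2\}$ by a simultaneous linear sweep through the breakpoints of both functions, in the spirit of a merge step. I would maintain one pointer into each breakpoint list and process events in increasing order of time, always advancing whichever pointer has the smaller next breakpoint time. Between two consecutive events, both $a_1$ and $a_2$ are affine (or one of them is undefined, outside its domain, in which case the minimum is just the other), so detecting whether their graphs cross strictly inside the current interval reduces to solving a single linear equation in $O(1)$ time. If such a crossing exists, I append a breakpoint at the crossing and toggle which function is currently the minimum; otherwise the pointwise smaller of the two affine pieces extends to the next joint breakpoint. Each pointer advance takes constant time and the two pointers together advance at most $b_1 + b_2$ times, which yields the claimed $O(b_1+b_2)$ running time.

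For the breakpoint bound under the hypothesis $\dom(a_1)=\dom(a_2)$, I would use a direct counting argument. The two functions share the right endpoint $t_{\max}$, so the union of their breakpoint times has cardinality at most $b_1 + b_2 - 1$, and these times partition the common domain into at most $b_1 + b_2 - 2$ open subintervals. On each such subinterval both functions are affine and hence can agree at most once, so the minimum acquires at most one additional breakpoint per subinterval from a crossing. Adding the joint breakpoints and the possible crossings gives at most $(b_1+b_2-1)+(b_1+b_2-2) = 2(b_1+b_2)-3$ breakpoints for $\min\{a_1,a_2\}$, matching the statement.

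To finish, I would verify that $\min\{a_1,a_2\}$ satisfies all four conditions of Definition~\ref{def:atf}. Continuity, monotonicity, and the lower bound $a(t) \ge t$ are each preserved when taking the pointwise minimum of two functions satisfying these properties; piecewise linearity is immediate from the construction above; and the minimum is constant on $(-\infty, \min\{t_{\min,1}, t_{\min,2}\}]$ because each $a_i$ is constant on its own initial ray. The one point that requires real care is the breakpoint count, since the constant $2(b_1+b_2)-3$ can easily be off by one or two if one overlooks that $t_{\max}$ is shared under the hypothesis; the $O(b_1+b_2)$ running time and the verification of the ATF properties are then routine.
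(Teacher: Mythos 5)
Your proposal is correct and follows essentially the same route as the paper: a linear-time sweepline/merge over the two sorted breakpoint lists, the observation that the four ATF properties are preserved under pointwise minimum (with continuity needing the common domain), and a breakpoint count that charges each new breakpoint either to an original breakpoint or to the at-most-one crossing per common affine piece, arriving at the same $2(b_1+b_2)-3$ bound. The only implicit step is that the unbounded initial ray, on which both functions are constant, contributes no crossing breakpoint — which your final remark about the minimum being initially constant already covers.
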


 \begin{proof}
  We can use a standard sweepline algorithm to compute the points where $a_1$ and $a_2$ intersect.
  Since we assume that we are given a sorted list of the breakpoints of those functions, this can easily be implemented in linear time.
  It is also clear that $\min\{a_1, a_2\}$ is again initially constant, non-decreasing, and satisfies $\min\{a_1(t), a_2(t)\} \geq t$.
  The function is continuous if $\dom(a_1) = \dom(a_2)$.

  Inner breakpoints of $\min\{a_1, a_2\}$ can only appear at inner breakpoints of $a_1$ and $a_2$ or where $a_1$ and $a_2$ intersect.
  But in each interval in which both ATFs are affine they intersect at most once. Because there are at most $(b_1 - 1) + (b_2 - 1)$ inclusionwise maximal intervals in which $a_1$ and $a_2$ are affine and intersect, this concludes the proof.
 \end{proof}

 Unlike the composition of two functions, taking the minimum of two ATFs may in general increase the total number of breakpoints.
 See Figure~\ref{fig:atf_min_growth} for an example.

 However, it is known that the lower envelope (i.e., pointwise minimum) of $n$ line segments in the plane can have at most $O(n \alpha(n))$ breakpoints, where $\alpha(n)$ is the inverse Ackermann function \cite{HS86}.
This upper bound is sharp \cite{Wiernik86}.
 This implies that the number of breakpoints in the pointwise minimum of many arrival time functions is almost linear in the size of the input.
 A simple recursive approach as in Collorary~\ref{cor:composing_atfs} will give a running time of $O(m \alpha(m) \log n)$ where $n$ is the number of functions of which we are taking the minimum and $m$ is the total number of breakpoints \cite{agarwal2000davenport}.
We will now improve on this.

\subsection{A faster algorithm for computing the pointwise minimum}

In this section we give an algorithm for computing the pointwise minimum of $n$ piecewise linear functions with a total of $m$ breakpoints in time $O(m \log n)$, assuming that we have a sorted list of breakpoints for each input function.
 Note that this is a natural bound because all breakpoints need to be sorted by their x-coordinates for the computation of the pointwise minimum.
 In practice, the number of functions is usually small and so the pointwise minimum computation takes effectively linear time.
 For our algorithm we need the following lemma.

 \begin{lemma}\label{lem:atf_minimum_computation}
 Let $f_1, \dots, f_n \colon \mathbb{R} \to \mathbb{R}$ be affine functions, i.e.,
 $f_i(x) = s_i x + c_i$ for all $x \in \mathbb{R}$, where $s_i,c_i \in \mathbb{R}$ are constants $(i=1,\ldots,n)$.
 Then the pointwise minimum $f$ of the $f_i$ has at most $n - 1$ breakpoints.
 If the $f_i$ are sorted by their slopes $s_i$, then $f$ can be computed in $O(n)$ time.
\end{lemma}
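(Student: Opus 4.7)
The plan is to use the classical incremental ``upper/lower envelope of lines'' algorithm (a convex hull sweep). First I would handle the degenerate case: if two of the $f_i$ share a slope, only the one with the smaller intercept can contribute to the minimum, so a single linear pass through the sorted input suffices to discard the dominated duplicates. From now on I may assume the slopes are strictly increasing.

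Next I would process the lines $f_1,\dots,f_n$ in order of their (now strictly increasing) slopes, maintaining a stack $S$ whose entries, from bottom to top, are the lines that form the pointwise minimum of those processed so far, ordered by the $x$-coordinate at which they take over. When processing $f_i$, I repeat the following: if $S$ contains at least two lines $g$ (top) and $h$ (below $g$), compute $x_1$, the $x$-coordinate of the intersection of $h$ and $g$, and $x_2$, the $x$-coordinate of the intersection of $h$ and $f_i$; if $x_2 \le x_1$, pop $g$. Otherwise stop popping. Then push $f_i$ onto $S$. The resulting stack is read off to obtain the breakpoints.

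For correctness I would argue by induction on $i$: after processing $f_1,\dots,f_i$, the stack represents $\min\{f_1,\dots,f_i\}$ in left-to-right order, and the $x$-coordinates of successive intersections in the stack are strictly increasing. Because $f_i$ has the largest slope so far, it is the minimizer on some right-unbounded interval, so it must appear as the rightmost piece. A line $g$ currently on top is eclipsed by $f_i$ exactly at $x_1$; the test $x_2\le x_1$ checks whether $f_i$ has already overtaken $h$ before $g$ ever becomes optimal, i.e., whether $g$'s segment has disappeared entirely. This is the only nontrivial case analysis, and it is the step I expect to be the main obstacle in writing the proof cleanly.

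Finally, the running time is $O(n)$ by amortization: each line is pushed onto $S$ at most once and popped at most once, so the total work across all iterations and all inner while-loop executions is $O(n)$. For the breakpoint bound, the final stack has at most $n$ entries, hence represents a piecewise linear function with at most $n-1$ breakpoints, as claimed.
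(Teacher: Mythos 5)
Your overall strategy is sound and is essentially the same as the paper's: process the lines in slope order and build the envelope incrementally, charging the work to lines that are discarded, so that each line is inserted and removed at most once and the total time is $O(n)$. (The paper phrases this as a sweep of $\min(g_i,f_{i+1})$ that stops as soon as $f_{i+1}$ leaves the envelope; your monotone stack is the same amortization.) The duplicate-slope preprocessing and the derivation of the $n-1$ breakpoint bound from the at most $n$ surviving pieces are also fine; the paper gets the bound slightly more directly from concavity of the pointwise minimum of concave functions, which does not rely on the algorithm's correctness.

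However, there is a concrete error in the orientation of the envelope. For the pointwise \emph{minimum}, the line with the largest slope is the smallest as $x\to-\infty$ (for $x\ll 0$, $s x + c$ is more negative when $s$ is larger), so when you insert $f_i$, which has the largest slope so far, it becomes the \emph{leftmost} piece of the envelope, not the rightmost one as you claim in your induction step. Correspondingly, your pop test is inverted: with $x_1$ the intersection of $h$ and $g$ and $x_2$ the intersection of $h$ and $f_i$, the top line $g$ survives in the lower envelope precisely when $x_2 < x_1$ (there is then a nonempty interval on which $g<h$ and $g<f_i$), so you must pop when $x_2 \ge x_1$ — the opposite of the rule you state. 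As written, your algorithm and correctness argument compute the pointwise \emph{maximum} of lines inserted in increasing slope order. The fix is mechanical (swap the roles of left and right, or equivalently process the lines in decreasing slope order, or negate everything), but the proof as stated does not go through for the minimum.
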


\begin{proof}
 Without loss of generality, assume $s_1 < \dots < s_n$ (for each slope we only need to consider the function with smallest $c_i$).
 Since pointwise minima of concave functions are concave, the slope of $f$ never increases, which already shows the bound for the breakpoints.

 For the computation of $f$, we set $g_1 := f_1$ and iteratively compute $g_{i+1} := \min(g_i, f_{i+1})$ by the standard sweep line algorithm, but stop the sweep line when $f_{i+1}$ is not the minimum anymore.
 By the above observation this algorithm is correct. For each $i$ let $b_i$ be the number of breakpoints of $g_i$.
 Then, for $i = 1, \dots, n-1$ we need to compute at most $b_i - b_{i+1} + 2$ intersections of affine functions for the computation of $g_{i+1}$.
 Therefore, in total we need to compute at most $2(n-1) - b_n$ such intersections, which concludes the proof.
\end{proof}

The idea is now to use Lemma~\ref{lem:atf_minimum_computation} to compute the minimum of ATFs on each interval $I$ of the domain on which all ATFs are affine.
Naively, this would of course only give a running time of $O(m n)$.
However, we will show (again using Lemma~\ref{lem:atf_minimum_computation}) that we can use a preprocessing step to reduce the number of functions that may contribute to the minimum over $I$ to $O(\log n)$ on average.
In order to carry out this step efficiently, we will need the following technical lemma.

\begin{lemma}\label{lem:fast_sorting}
 Let $S=\{s_1,\dots,s_m\}$ be a set of $m$ real numbers and $M_1, \dots, M_k \subseteq \{1,\dots, m\}$ non-empty such that $\sum_{i=1}^k |M_i| = O(m \log m)$. Then the sets $\{s_j : j \in M_i\}$ can be sorted in total time $O(m \log m)$.
\end{lemma}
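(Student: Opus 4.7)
The plan is to do one global sort of $S$ up front, then extract each $\{s_j : j \in M_i\}$ in sorted order via a single sweep, avoiding any per-set comparison sort whose costs would accumulate to $\Theta(m \log^2 m)$.

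Concretely, I would first sort $S$ in $O(m \log m)$ time, producing a permutation $\pi$ of $\{1,\dots,m\}$ with $s_{\pi(1)} \le s_{\pi(2)} \le \dots \le s_{\pi(m)}$, and store the rank array $r[j] := \pi^{-1}(j)$. Next I would build the ``inverse incidence'' representation of the family $M_1,\dots,M_k$: for every $j \in \{1,\dots,m\}$, a list $L_j$ of those indices $i$ with $j \in M_i$. This is constructed by one linear pass over all pairs $(i,j)$ with $j \in M_i$, so it takes $O(m + \sum_{i=1}^k |M_i|) = O(m \log m)$ time and space. Finally, I would allocate one output list $O_i$ per set $M_i$, then sweep $j'=1,2,\dots,m$ in the sorted order and, at step $j'$, for each $i \in L_{\pi(j')}$ append $s_{\pi(j')}$ to $O_i$. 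Since the elements of $S$ are visited in non-decreasing order, the list $O_i$ ends up sorted, and correctness is immediate from $j \in M_i \iff i \in L_j$.

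For the running time, the global sort is $O(m \log m)$; the construction of the lists $L_j$ and the final sweep each cost $O(m + \sum_i |M_i|)$. By hypothesis $\sum_i |M_i| = O(m \log m)$, so the total is $O(m \log m)$, as required.

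I don't anticipate a genuine obstacle here: the only thing to watch is that one must not sort the sets $M_i$ individually (a naive comparison sort per set blows the budget when many $|M_i|$ are small but $k$ is large, or when one $M_i$ is of size $\Theta(m \log m)$ and would need $\Theta(m \log m \cdot \log m)$ alone). Routing everything through a single sort of $S$ and then piggybacking on the pre-existing order sidesteps this.
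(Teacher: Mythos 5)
Your proposal is correct and is essentially identical to the paper's proof: one global sort of $S$, an inverse incidence list $L_j$ of the sets containing each index $j$, and a single sweep in sorted order that appends each element to the output lists of all sets containing it. No gaps; the accounting of $O(m+\sum_i|M_i|)=O(m\log m)$ matches the paper's.
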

\begin{proof}
 For each $x \in \{1,\dots, m\}$ we create a list $L_x$ of all $i \in \{1, \dots, k\}$ such that $x \in M_i$. For this we only need to traverse each $M_i$ once.
 Then we find a permutation $\pi : \{1,\dots, m\} \to \{1,\dots,m\}$ such that $s_{\pi(1)} \le s_{\pi(2)} \le \dots \le s_{\pi(m)}$.
 Finally, traversing the lists $L_{\pi(1)}, \dots, L_{\pi(m)}$ in this order yields the desired orderings on all $M_i$.
\end{proof}

Now we obtain the main result of this section. It holds not only for ATFs, but for general piecewise linear functions,
and may have applications far beyond vehicle routing.

\begin{theorem}\label{thm:atf_minimum_computation}
Let $f_1, \dots, f_n \colon D \to \mathbb{R}$ be piecewise linear functions with $m$ breakpoints in total.
Then their pointwise minimum $f \colon D \to \mathbb{R}$ can be computed in $O(m \log n)$ time.
\end{theorem}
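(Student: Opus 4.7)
The plan follows the blueprint sketched just before the theorem: partition the domain into the $O(m)$ maximal intervals on which every $f_i$ is affine, and on each one apply Lemma~\ref{lem:atf_minimum_computation}; to avoid the naive $O(mn)$ cost, one precomputes for each interval a slope-sorted candidate list of average size $O(\log n)$, with the help of Lemma~\ref{lem:fast_sorting}.

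First I would $n$-way merge the $n$ sorted breakpoint lists with a min-heap in $O(m\log n)$ time, producing the globally sorted breakpoints $t_0<t_1<\dots<t_M$ with $M=O(m)$; on each interval $I_j=[t_{j-1},t_j]$ every $f_i$ is a single line. Next I would arrange the functions as the leaves of a balanced binary tree $T$ of height $\lceil\log n\rceil$. For each node $v$ let $F_v$ be the set of functions at the leaves of $v$'s subtree, and for each interval $I_j$ let $C_v(I_j)\subseteq F_v$ be the slope-sorted list of those functions in $F_v$ that appear on the lower envelope of $F_v$ restricted to $I_j$. These lists are built bottom-up: singletons at the leaves; at an internal node $v$ with children $u,w$, merge the two slope-sorted lists $C_u(I_j)$ and $C_w(I_j)$ and apply Lemma~\ref{lem:atf_minimum_computation} to prune the merged list to the lines that actually reach the envelope. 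At the root, $C_{\mathrm{root}}(I_j)$ is the lower envelope on $I_j$, and concatenating over $j$ yields $f$.

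The reason this can be made to run in $O(m\log n)$ is that on any single interval the sub-problem at every node is a lower envelope of \emph{lines}, not of general piecewise linear functions, so Lemma~\ref{lem:atf_minimum_computation} spends only $O(|C_u(I_j)|+|C_w(I_j)|)$ per node. A charging argument then bounds $\sum_v\sum_j |C_v(I_j)|$: each line of some $f_i$ on interval $I_j$ is charged once per level at which it still sits on its ancestor's envelope, and by monotonicity this happens only until it is first beaten by a line from the sibling subtree; combining this monotonicity with the linear piece-count of lower envelopes of lines on a single interval produces the $O(\log n)$ average quoted in the paragraph preceding the theorem, hence the total $O(m\log n)$.

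The main obstacle is producing the slope-sorted versions of all the lists $C_v(I_j)$ without paying an extra sorting factor per node. This is exactly what Lemma~\ref{lem:fast_sorting} handles: the universe is the set of at most $m$ distinct slopes appearing in the input (one per affine piece of each $f_i$), the indexed subsets are the $C_v(I_j)$, and by the charging argument just sketched their sizes sum to $O(m\log n)$; Lemma~\ref{lem:fast_sorting} then produces every slope ordering in $O(m\log m)$ total time, which is $O(m\log n)$ in the regime where $\log m=O(\log n)$ and can be handled by a direct $n$-way merge argument otherwise. With the sorted orders available, every invocation of Lemma~\ref{lem:atf_minimum_computation} runs in time linear in its input, and summing over the tree yields the claimed $O(m\log n)$ bound.
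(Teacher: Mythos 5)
There is a genuine gap, and it lies at the heart of the argument: you build the binary tree over the \emph{functions}, whereas the paper builds it over the \emph{elementary intervals} of the domain, and your charging argument does not survive this change. Concretely, with all inner breakpoints distinct there are $M=\Theta(m)$ elementary intervals $I_j$, and at the leaf level of your tree every function $f_i$ contributes a (trivially envelope-attaining) singleton list $C_{\text{leaf}(i)}(I_j)$ for \emph{every} $j$. Hence $\sum_v\sum_j|C_v(I_j)|\ge n\cdot M=\Theta(nm)$ before a single merge is performed; your charging argument (``a line is charged once per level until it is beaten by the sibling subtree'') only bounds how far each such pair propagates \emph{upward}, not this $\Theta(nm)$ base count. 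Even just enumerating the lists you propose to maintain exceeds the target bound whenever $n=\omega(\log n)$. If you instead try to represent the lists lazily, working only over the maximal intervals on which all of $F_v$ is affine, the interval counts become $O(m)$ per level, but then $\sum_J|C_v(J)|$ is governed by the total complexity of the lower envelope of $F_v$, which by the Davenport--Schinzel/Wiernik bounds cited in the paper is superlinear; this is exactly the ``simple recursive approach'' that the paper states gives only $O(m\,\alpha(m)\log n)$ and explicitly sets out to improve.

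The missing idea is the paper's Step~1: a dyadic hierarchy $\mathcal{I}=\{I_{i,j}\}$ of \emph{intervals} of the domain, together with an assignment of each function, for each leaf interval $I_l$, to exactly \emph{one} ancestor interval $I\supseteq I_l$ on which that function is affine (properties \ref{item:size_A_I}--\ref{item:partition} of the sets $A_I$). Because an interval at level $i$ contains fewer than $2^{k-i}$ breakpoints, at least half of the functions are affine on it and can be ``parked'' there, covering all $2^{k-i}$ of its leaf intervals at once; this is what brings the total list size down to $\sum_I|A_I|=O(n\log n)$ per chunk and yields, via Lemmas~\ref{lem:atf_minimum_computation} and~\ref{lem:fast_sorting}, the $O(m\log n)$ bound. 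Your use of those two lemmas (concave envelopes of slope-sorted lines in linear time, and batched sorting of the candidate lists) matches the paper's Steps~2 and~3, but without the interval hierarchy and the partition property the candidate lists are too numerous for the accounting to close.
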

\begin{proof}
Let $x_0 < x_1 < \dots < x_m$ be all x-coordinates of breakpoints of the input functions $f_h \; (h=1,\dots, n)$ and $D=[x_0,x_m]$.
In order to simplify the notation, we assume that the x-coordinates of inner breakpoints are all distinct
 and the $f_h$ only intersect in isolated points and not in line segments.
    Without loss of generality we will assume that $n = 2^k - 1$ for some $k \in \mathbb{N}$.
 The $x_i$ can be sorted in $O(m \log n)$ time by merging the sorted lists of each $f_h$.

 Let $I_l := [x_{l-1}, x_{l}]$ be the $l$-th inclusionwise maximal interval in which all input functions are affine.
    From now on we assume that there are $2^k$ such intervals and show that in that case we can achieve a running time of $O(2^k k) = O(n \log n)$.
    The general case then follows by computing minima in chunks of size at most $2^k$.\\

 \textbf{Step 1:}
 For each $i\in \{1,\dots,k\}$ we subdivide the domain $D$ into intervals $I_{i,1}, \dots, I_{i, 2^i}$ as follows
 (cf.\ Figure~\ref{fig:compute_min_of_ATFs}).
 For $i=k$, we define $I_{k, j} := I_j$ for $1 \leq j \leq 2^k$.
 For $i<k$ we define the intervals $I_{i,j}$ recursively by setting $I_{i, j} := I_{(i + 1),(2j - 1)} \cup I_{(i + 1),(2j)}$ for $1 \leq j \leq 2^i$.
 We write
 \begin{equation*}
  \mathcal{I} := \left\{I_{i, j} : 1 \leq i \leq k, 1 \leq j \leq 2^i \right\}
 \end{equation*}
 to denote the set of all these intervals.
 Note that any two elements of $\mathcal{I}$ either have disjoint interior or one of the intervals is a subset of the other.

 For each $I  \in \mathcal{I}$ we compute a set $A_I \subseteq \{f_1, \dots, f_n\}$ such that
\begin{enumerate}
\item $|A_I| = 2^{k-i}$ for $I= I_{i,j}$, \label{item:size_A_I}
\item each function in $A_I$ is affine on $I$, and  \label{item:affine_on_I}
\item for each $l \in \{1,\dots, 2^k\}$, the sets $A_I$ with $I_l \subseteq
I$ are  a partition of $\{f_1, \dots, f_n\}$. \label{item:partition}
\end{enumerate}
To show that we can indeed compute such sets $A_I$ in $O(n \log n)$ time, we use a divide-and-conquer approach.

 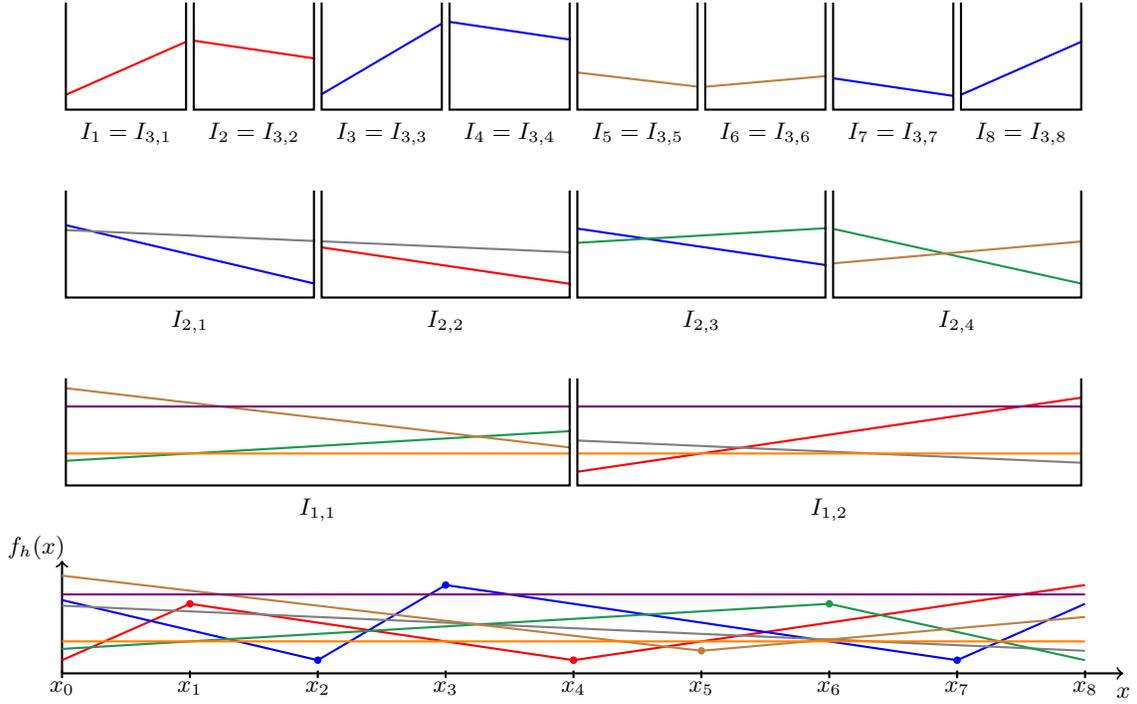
\begin{figure}[htb]
  \begin{center}
   \begin{tikzpicture}[yscale=2.5, xscale=1.7, thick]

    \draw[->] (0,0.03) to (8.3,0.03);
    \draw[->] (0,0.03) to (0,0.63);
    \node at (8.3,-0.07) {\footnotesize $x$};
    \node at (-0.2,0.7) {\footnotesize $f_h(x)$};

   \begin{scope}[red]
   \draw (0,0.1) -- (1,0.4) -- (4,0.1) -- (8,0.5);
   \node[fill,circle,inner sep=1] at (1,0.4) {};
   \node[fill,circle,inner sep=1] at (4,0.1) {};
   \draw (0,3.1) -- (1,3.4);
   \draw (1,3.4) -- (2,3.3);
   \draw (2,2.3) -- (4,2.1);
   \draw (4,1.1) -- (8,1.5);
   \end{scope}
   \begin{scope}[blue]
   \draw (0,0.42) -- (2,0.1) -- (3,0.5) -- (7,0.1) -- (8,0.4);
   \node[fill,circle,inner sep=1] at (2,0.1) {};
   \node[fill,circle,inner sep=1] at (3,0.5) {};
   \node[fill,circle,inner sep=1] at (7,0.1) {};
   \draw (0,2.42) -- (2,2.1);
   \draw (2,3.1) -- (3,3.5);
   \draw (3,3.5) -- (4,3.4);
   \draw (4,2.4) -- (6,2.2);
   \draw (6,3.2) -- (7,3.1);
   \draw (7,3.1) -- (8,3.4);
   \end{scope}
   \begin{scope}[gray]
   \draw (0,0.39) -- (8,0.15);
   \draw (0,2.39) -- (2,2.33);
   \draw (2,2.33) -- (4,2.27);
   \draw (4,1.27) -- (8,1.15);
   \end{scope}
   \begin{scope}[darkgreen]
   \draw (0,0.16) -- (6,0.4) -- (8,0.1);
   \node[fill,circle,inner sep=1] at (6,0.4) {};
   \draw (0,1.16) -- (4,1.32);
   \draw (4,2.32) -- (6,2.4);
   \draw (6,2.4) -- (8,2.1);
   \end{scope}
   \begin{scope}[brown]
   \draw (0,0.55) -- (5,0.15) -- (8,0.33);
   \node[fill,circle,inner sep=1] at (5,0.15) {};
   \draw (0,1.55) -- (4,1.23);
   \draw (4,3.23) -- (5,3.15);
   \draw (5,3.15) -- (6,3.21);
   \draw (6,2.21) -- (8,2.33);
   \end{scope}
   \begin{scope}[violet]
   \draw (0,0.45) -- (8,0.45);
   \draw (0,1.45) -- (4,1.45);
   \draw (4,1.45) -- (8,1.45);
   \end{scope}
   \begin{scope}[orange]
   \draw (0,0.2) -- (8,0.2);
   \draw (0,1.2) -- (4,1.2);
   \draw (4,1.2) -- (8,1.2);
   \end{scope}

   \fill[white] (0.98,3) rectangle (1.02,3.6);
   \fill[white] (2.98,3) rectangle (3.02,3.6);
   \fill[white] (4.98,3) rectangle (5.02,3.6);
   \fill[white] (6.98,3) rectangle (7.02,3.6);
   \fill[white] (1.98,2) rectangle (2.02,3.6);
   \fill[white] (5.98,2) rectangle (6.02,3.6);
   \fill[white] (-0.02,1) rectangle (0.02,3.6);
   \fill[white] (3.98,1) rectangle (4.02,3.6);
   \fill[white] (7.98,1) rectangle (8.02,3.6);

      \foreach \j in {1,...,8} {
      \draw (\j-0.97,3.6) -- (\j-0.97,3.03) -- (\j-0.03,3.03) -- (\j-0.03,3.6);
      \node at (\j-0.5,2.9) {\footnotesize $I_{\j}=I_{3,\j}$};
    }
   \foreach \j in {1,...,4} {
      \draw (2*\j-1.97,2.6) -- (2*\j-1.97,2.03) -- (2*\j-0.03,2.03) -- (2*\j-0.03,2.6);
      \node at (2*\j-1,1.9) {\footnotesize $I_{2,\j}$};
    }
   \foreach \j in {1,...,2} {
      \draw (4*\j-3.97,1.6) -- (4*\j-3.97,1.03) -- (4*\j-0.03,1.03) -- (4*\j-0.03,1.6);
      \node at (4*\j-2,0.9) {\footnotesize $I_{1,\j}$};
    }
    \foreach \j in {0,...,8} {
      \draw (\j, 0.005) -- (\j,0.045);
      \node at (\j,-0.05) {\footnotesize $x_{\j}$};
    }
   \end{tikzpicture}
  \end{center}
  \caption{The intervals $I_{i,j}$ (for $k=3$ and $n=7$) in the proof of Theorem~\ref{thm:atf_minimum_computation},
   and a possible assignment of the seven piecewise linear functions shown at the bottom to these intervals.
    \label{fig:compute_min_of_ATFs}
  }
 \end{figure}

    Let $T(n)$ be the number of steps needed to carry out the construction of all $A_I$.
Since the number of breakpoints in the interiors of $I_{1, 1}$ and $I_{1, 2}$ is each less than $2^{k - 1}$, we can easily find sets $A_{I_{1, 1}}$ and $A_{I_{1, 2}}$ that fulfill properties \ref{item:size_A_I} and \ref{item:affine_on_I} in $C n$ steps for some $C > 0$.
In order to satisfy property~\ref{item:partition}, we can then recurse on $I_{1, 1}$ with the set of functions $\{f |_{I_{1, 1}} \mid f \notin A_{I_{1, 1}}\}$ and likewise for $I_{1, 2}$.
    Thus, $T(n) \leq 2 T(\tfrac{n}{2}) + C n$ and so we have $T(n) = O(n \log n)$.
    Note also that the total number of elements in all $A_I$ is also $O(n \log n)$.

 \textbf{Step 2:} We use Lemma~\ref{lem:fast_sorting} to sort all the $A_I$ by the slopes of the affine segments. Then, for each $I \in \mathcal{I}$ we can compute
 the pointwise minimum $f_I$ of the input functions in $A_I$ on $I$ by Lemma~\ref{lem:atf_minimum_computation}, using $O(n \log n)$ time in total. This implies for each $l \in \{1, \dots, 2^k\}$ and each $x \in I_l$:
 \begin{equation}\label{thm:atf_minimum_computation:char}
 \begin{split}
  f(x) & = \min\{f_h(x) \ | \ 1 \leq h \leq n\} \\\
  & = \min\{ \min\{g(x) \ | \ g \in A_I\} \ | \ I_l \subseteq I \in \mathcal{I} \} \\\
  & = \min\{ f_I(x) \ | \ I_l \subseteq I \in \mathcal{I}\}. \\\
  \end{split}
 \end{equation}

 \textbf{Step 3:} For each $ l \in \{1,\dots, 2^k\}$ let us consider the functions $f_I$ with $I_l \subseteq I \in \mathcal{I}$. Let $S_l$ be the set of all affine segments of the input functions $f_h$
 that represent such a function $f_I$ with $I_l \subseteq I \in \mathcal{I}$ on a non-empty open interval in $I_l$. Since $f_I$ consists of at most $|A_I|$ many affine segments  due to Lemma~\ref{lem:atf_minimum_computation}, we get
 \begin{equation*}
     \sum_{l=1}^{2^k} |S_l| \leq \sum_{I \in \mathcal{I}} \bigl( |A_I| +  |\{l \ | \ I_l \subseteq I\}| - 1 \bigr) = O(n \log n).
 \end{equation*}
 Therefore, we can use Lemma~\ref{lem:fast_sorting} again to sort all $S_l$ by the slopes of the affine segments. Computing the pointwise minimum of the segments in $S_l$ on $I_l$
 for each $1 \leq l \leq 2^k$ yields the desired result $f$ by equation (\ref{thm:atf_minimum_computation:char}).
 The running time again follows from Lemma~\ref{lem:atf_minimum_computation}.
\end{proof}

 \subsection{Approximating arrival time functions}

The operations of composing or forming the pointwise minimum of ATFs can lead to ATFs with many breakpoints.
In order to speed up our algorithms, it is essential to reduce the number of breakpoints, which we will do by approximating ATFs.
In order to guarantee the feasibility of our solutions, the approximated ATF should be an upper bound on the original ATF. Moreover, since we have to approximate ATFs millions of times, the algorithm must be very efficient.

For these reasons we adapt the classic polygonal line simplification algorithm due to Imai and Iri \cite{II86}.
Given $\varepsilon >0$ and a continuous, piecewise linear function $f$ defined on a compact interval,
the algorithm computes in linear time another such function $g$ with $f \leq g \leq f + \varepsilon$ such that $g$ has a minimum number of breakpoints.
However, the Imai--Iri algorithm is not designed to preserve monotonicity in general.
Moreover, the domain of ATFs begins at $-\infty$.
One can see that monotonicity can be easily restored in a linear time post-opt routine (as done for example in \cite{Batz}).
However, we will show that with an appropriate choice of the last breakpoint, the algorithm will preserve monotonicity for free.

The Imai--Iri algorithm interprets the problem of approximating $f$ as the problem of traversing the polygonal corridor
between $f$ and $f + \varepsilon$ with the least number of breakpoints.
As the Imai--Iri algorithm, our algorithm computes the area in which the $i$th breakpoint can be, for $i=1,2,\dots$.
The ``right'' boundary of this area will be denoted by $W_i$.
The $i$th breakpoint $q_i$ of $g$ will always be chosen on $W_i$.
See Figure~\ref{fig:visibility} for an example.

 \begin{figure}[htb]
  \begin{center}
   \begin{tikzpicture}[xscale=2, yscale=1.2]

    \foreach \x/\y [count=\i] in {0.2/0, 1.5/0, 3.2/0.5, 4.1/2, 5/2.5, 5.3/3.4, 6.3/3.6, 6.9/3.6, 7.2/4.1} {
       \coordinate (p\i) at (\x, \y);
       \coordinate (r\i) at (\x, {\y+0.75});
    }
      \foreach \i in {2, ..., 9} {
       \node[circle, fill, inner sep=0.1em] at (p\i) {};
       \node[circle, fill, inner sep=0.1em] at (r\i) {};
    }
    \foreach \i [count=\j from 2] in {1, ..., 8} {
     \draw[-] (p\i) -- (p\j);
     \draw[-] (r\i) -- (r\j);
    }

    \draw[-, thick] (p9) -- (r9);

       \coordinate (i1) at (intersection of r1--r2 and p3--p4);
       \coordinate (i2) at (intersection of r3--p6 and r6--r7);
       \coordinate (i3) at (intersection of p4--r5 and r6--r7);
       \coordinate (i4) at (intersection of p4--r3 and p5--p6);
       \coordinate (i5) at (intersection of p6--r8 and p9--r9);
       \coordinate (q0) at (intersection of p1--r1 and r2--i1);
       \coordinate (q1) at (intersection of r1--r2 and r3--p6);
       \coordinate (q2) at (intersection of p4--p6 and p6--p7);
       \coordinate (q3) at (i5);

       \fill[blue,opacity=0.3] (p1) -- (p2) -- (p3) -- (i1) -- (r2) -- (r1) -- cycle;
       \draw[blue, very thick] (r2) -- (i1);
       \fill[darkgreen,opacity=0.3] (r2) -- (i1) -- (p4) -- (i4) -- (p6) -- (i2) -- (i3) -- (r5) -- (r4) -- (r3) -- cycle;
       \draw[darkgreen, very thick] (p6) -- (i2);
       \fill[orange,opacity=0.3] (q2) -- (p7) -- (p8) -- (p9) -- (i5) -- (r8) -- (r7) -- (i2) -- cycle;

       \draw[-, red, dashed, very thick] (q0) -- (q1) -- (q2) -- (q3);
       \node[circle, red, fill, inner sep=0.12em] at (q1) {};
       \node[below left=-0.5mm, red] at (q1) {\small $q_1$};
       \node[circle, red, fill, inner sep=0.12em] at (q2) {};
       \node[below right=-0.5mm, red] at (q2) {\small $q_2$};
       \node[circle, red, fill, inner sep=0.12em] at (q3) {};
       \node[right=0mm, red] at (q3) {\small $q_3$};

       \node[blue,right=1mm] at (i1) {\small $W_1$};
       \node[darkgreen,above] at (i2) {\small $W_2$};

       \node[below] at (2.2, 0.25) {\small $f$};
       \node[above] at (2.2, 1) {\small $f + \varepsilon$};

       \node[below] at (p2) {\scriptsize $p_1$};
       \node[above] at (r2) {\scriptsize $p_1^+$};
       \node[below right=-0.5mm] at (p9) {\scriptsize $p_b$};
       \node[above right=-0.7mm] at (r9) {\scriptsize $p_b^+$};
   \end{tikzpicture}
  \end{center}
  \caption{Illustrating Definition~\ref{def:visibility} and Algorithm~\ref{alg:imai_iri_atfs}.
      The algorithm progresses through the corridor between $f$ and $f+\varepsilon$ by constructing visibility polygons
      (shaded blue, green, orange) and their windows on the right boundary.
      The blue area is weakly visible from infinitely far to the left; $P_1$ is the remaining polygon.
      The green area is weakly visible from the blue area, and in fact from the window $W_1$.
      There are three invisibility polygons from $W_1$; the relevant one, $P_2$, is to the right of the window $W_2$
      (the window from $W_1$ to the line segment $\overline{p_bp_b^+}$ in $P_1$).
      The orange area is the part of $P_2$ that is weakly visible from $W_2$.
      A possible output $g$ is the ATF shown red and dashed, with breakpoints $q_1,q_2,q_3$.
   \label{fig:visibility}
  }
 \end{figure}
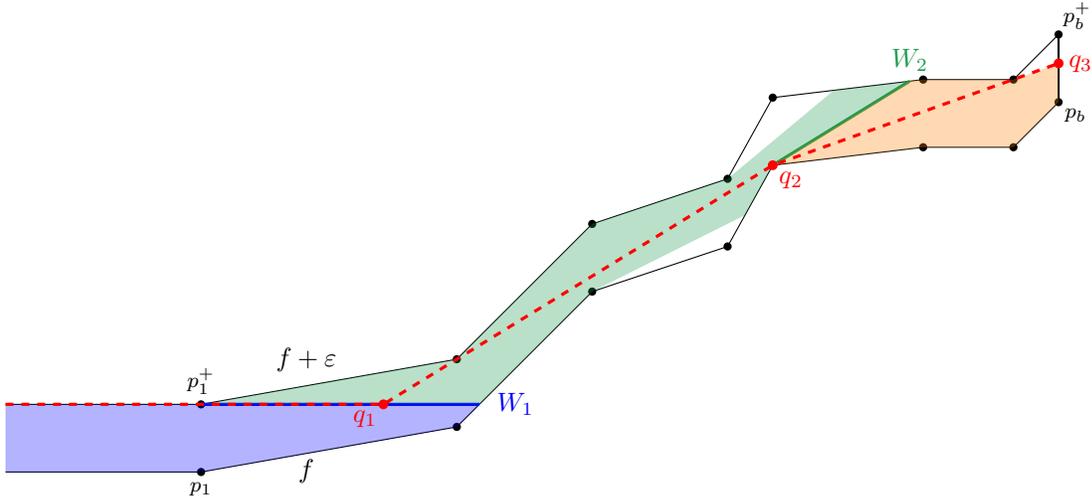

To formalize this idea we use the following notion.

 \begin{definition}\label{def:visibility}
  Let $P$ be a polygon and $A, B \subseteq P$ two polygons inside $P$ (we allow points and line segments as degenerate polygons).
  Then $B$ is \emph{weakly visible} from $A$ in $P$ if there exists a line segment $L \subseteq P$ with $L \cap A \neq \emptyset$ and $L \cap B \neq \emptyset$.
The set of all weakly visible points from $A$ in $P$ is called the \emph{visibility polygon} of $A$ in $P$.
  $P$ is partitioned into the visibility polygon and possibly several \emph{invisibility polygons}.
  If $B$ is not weakly visible from $A$ in $P$, we will call the line segment $W$ which borders the visibility polygon of $A$ and the invisibility
polygon containing $B$ the \emph{window} from $A$ to $B$ in $P$.
 \end{definition}

 \begin{algorithm}[htb]
  \caption{Approximating an ATF with fewer breakpoints.\label{alg:imai_iri_atfs}}
{\small
  \begin{tabularx}{\textwidth}{l X}
   \textbf{Input: }& an ATF $f : [-\infty, t_{\max}] \to \mathbb{R}$ and $\varepsilon > 0$ \\
   \textbf{Output: }& an ATF $g : [-\infty, t_{\max}] \to \mathbb{R}$ with $f \le g \le f + \varepsilon$ and a minimum number of breakpoints
  \end{tabularx}
  \begin{algorithmic}[1]
   \State Let $(t_{\min}, f(t_{\min}))$ be the first breakpoint of the ATF $f$
   \If{$f(t_{\max})\le f(t_{\min})+\varepsilon$}
   \State \Return the ATF $g$ given by $g(t):=f(t_{\max})$ for all $t\le t_{\max}$\label{algo:imai-iri-return-constant}
   \Else
   \State Let $p_1, \ldots, p_b$ be the breakpoints of $f$ and $p^+_1, \ldots, p^+_b$ the breakpoints of $f + \varepsilon$
   \State $t':= \min\{t: f(t)=f(t_{\min})+\varepsilon\}$ and $p':=(t',f(t'))$
   \State $W_1:= \overline{p_1^+ p'}$\label{alg:imai_iri:s1}
   \State $P_1 \coloneqq \text{polygon bounded by } p', p^+_1, \ldots, p^+_b, p_b, \ldots, p_j$, where $p_j$ is the first breakpoint after $t'$
   \State $i \coloneqq 1$
   \While{$\overline{p_b p^+_b}$ is not weakly visible from $W_i$ in $P_i$}
    \State $W_{i + 1} \coloneqq \text{window from } W_i \text{ to } \overline{p_b p_b^+} \text{ in } P_i$
    \State $P_{i + 1} \coloneqq \text{invisibility polygon from } W_i \text{ in } P_i \text{ containing } \overline{p_b p^+_b}$
    \State $q_i \coloneqq \text{intersection between the segment $W_i$ and the line containing the segment } W_{i + 1} $
    \State $i \coloneqq i + 1$
   \EndWhile
   \State Let $q_i \in W_i$ and $q_{i+1}\in \overline{p_b p^+_b}$ be visible from each other in $P_i$ s.t.\ $\overline{q_i q_{i + 1}}$ is non-decreasing\label{alg:imai_iri:pick_qi_qip1}
   \State \Return the ATF $g$ that is given by the breakpoints $q_1, \ldots, q_{i+1}$
   \EndIf
  \end{algorithmic}
}
 \end{algorithm}

A formal description of our algorithm is given by Algorithm~\ref{alg:imai_iri_atfs}.
Exactly as in \cite{II86}, the algorithm can be implemented to run in linear time using a sweep-line procedure.
To show correctness, we need two lemmata.
The first one follows the argument by Imai and Iri~\cite{II86}.

\begin{lemma}
If Algorithm~\ref{alg:imai_iri_atfs} outputs a function $g$ with $k$ breakpoints,
then any continuous, piecewise linear function $g'$ with $f\le g'\le f+\varepsilon$ has at least $k$ breakpoints.
\end{lemma}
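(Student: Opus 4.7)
The argument is the classical Imai--Iri visibility-based lower bound \cite{II86} adapted to our setting. Interpret the graph of any continuous piecewise linear $g'$ with $f \le g' \le f + \varepsilon$ as a polygonal curve inside the corridor polygon $P_1$: since $f \le g' \le f + \varepsilon$, the curve lies entirely in $P_1$, starts in the left region $P_1 \setminus P_2$, and must terminate on $\overline{p_b p_b^+}$. If $g'$ has $k'$ breakpoints, this curve consists of at most $k' - 1$ maximal line segments, each entirely contained in $P_1$.

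The central lemma I would prove is that no single line segment $L \subseteq P_1$ can both originate outside $P_{j+1}$ and terminate in the interior of $P_{j+2}$. Indeed, the restriction of $L$ to $P_{j+1}$ would then be a line segment inside $P_{j+1}$ connecting a point on $W_{j+1}$ to a point in $\mathrm{int}(P_{j+2})$, directly contradicting the definition of $P_{j+2}$ as the invisibility polygon from $W_{j+1}$ within $P_{j+1}$. A straightforward induction on $j$ then shows that after $j$ line segments the trace of $g'$ has not yet entered $\mathrm{int}(P_{j+2})$: the base case is immediate since the starting region is outside $P_2$, and the inductive step applies the lemma directly.

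Combining this with the algorithm's termination criterion (the target $\overline{p_b p_b^+}$ was not weakly visible from $W_l$ in $P_l$ for any $l < i$, and only becomes weakly visible from $W_i$ in $P_i$) shows that reaching $\overline{p_b p_b^+}$ requires at least $i$ line segments, hence $k' \ge i + 1$, matching the algorithm's output. The main obstacle is formalizing this last step cleanly, since $\overline{p_b p_b^+}$ lies on $\partial P_j$ for every $j$ and is therefore never excluded by an invariant purely about the interior; one has to additionally exploit the failed weak-visibility test inside each $P_l$, together with the ``single window per segment'' lemma, to conclude that a polygonal curve of length at most $i - 1$ from the starting region cannot terminate on $\overline{p_b p_b^+}$. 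The degenerate case in line~\ref{algo:imai-iri-return-constant} is immediate: the algorithm returns a constant function, and any valid $g'$ trivially has at least one breakpoint.
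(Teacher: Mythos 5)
Your proposal is correct and follows essentially the same route as the paper: the paper's own proof is a two-sentence sketch of exactly this visibility-polygon induction (the $i$th breakpoint of $g'$ cannot lie in $P_i\setminus W_i$), and your ``single segment cannot cross a window into the next invisibility polygon'' lemma is the standard Imai--Iri justification of that inductive step. The only difference is cosmetic: by stating the invariant with $P_i\setminus W_i$ rather than $\mathrm{int}(P_{i+1})$, the paper sidesteps the boundary issue you flag for $\overline{p_b p_b^+}$, since the failed weak-visibility test places that segment directly in $P_i\setminus W_i$.
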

\begin{proof}
For any such function $g'$, the first breakpoint cannot be in $P_1 \setminus W_1$.
By induction, the $i$th breakpoint cannot be in $P_i \setminus W_i$.
\end{proof}

Moreover, we need to show that the algorithm maintains monotonicity.

 \begin{lemma}
 In line~\ref{alg:imai_iri:pick_qi_qip1} of Algorithm~\ref{alg:imai_iri_atfs}, the points $q_i$ and $q_{i + 1}$ can be chosen such that the line segment $\overline{q_i q_{i + 1}}$ is non-decreasing.
The output $g$ is an ATF.
 \end{lemma}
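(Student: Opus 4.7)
The plan is to establish monotonicity of $g$ in three steps. \textbf{Step 1:} By induction on $i$, each window $W_i$ lies on a line of non-negative slope. The base case $W_1 = \overline{p_1^+ p'}$ is immediate since both endpoints have $y$-coordinate $f(t_{\min}) + \varepsilon$. For the induction step, $W_{i+1}$ passes through a reflex vertex $v$ of $P_i$ and lies on the line extending one of the two edges of $P_i$ incident to $v$. Since such a $v$ must block visibility from $W_i$, it cannot lie on $W_i$ itself, so it is a convex breakpoint of $f + \varepsilon$ on the upper boundary or a concave breakpoint of $f$ on the lower boundary. In both cases the edges at $v$ are segments of $f$ or $f + \varepsilon$, which have non-negative slope by monotonicity, so the supporting line of $W_{i+1}$ does too. \textbf{Step 2:} For each $j \in \{2,\ldots,i\}$ produced inside the loop, $q_{j-1}$ is defined as the intersection of $W_{j-1}$ with the line through $W_j$, and $q_j$ lies on $W_j$ and hence also on this line. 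Thus $\overline{q_{j-1}q_j}$ is contained in the supporting line of $W_j$, whose slope is non-negative by Step~1.

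\textbf{Step 3:} For the final pair in line~\ref{alg:imai_iri:pick_qi_qip1}, weak visibility of $\overline{p_b p_b^+}$ from $W_i$ supplies $w \in W_i$ and $q \in \overline{p_b p_b^+}$ with $\overline{wq} \subseteq P_i$; write $w = (w_x, w_y)$ and $q = (t_{\max}, q_y)$. If $w_y \le q_y$, then $q_i := w$ and $q_{i+1} := q$ works. Otherwise $w_y > q_y \ge f(t_{\max})$, and I would argue that the horizontal segment from $w$ to $(t_{\max}, w_y)$ already lies in $P_i$ and gives a slope-zero, hence non-decreasing, choice. Monotonicity of $f$ yields $f(x) \le f(t_{\max}) < w_y \le f(w_x) + \varepsilon \le f(x) + \varepsilon$ for all $x \in [w_x, t_{\max}]$, so the segment stays strictly between the two corridor boundaries; and since $W_i$ has non-negative slope, the horizontal ray meets $W_i$ only at $w$ and thereafter stays on the side containing $\overline{p_b p_b^+}$. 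The endpoint $(t_{\max}, w_y)$ lies in $\overline{p_b p_b^+}$. The subcase $i = 1$ is easier: since line~\ref{algo:imai-iri-return-constant} did not fire, $f(t_{\max}) > f(t_{\min}) + \varepsilon$, so every $w \in W_1$ automatically satisfies $w_y < f(t_{\max}) \le q_y$.

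Finally, $g$ is continuous, piecewise linear, non-decreasing (by the three steps combined with the convention that $g$ is constant before its first breakpoint), initially constant, and satisfies $g \ge f \ge \mathrm{id}$, so $g$ is an ATF. I expect Step~3 to be the main obstacle, since visibility and monotonicity must be achieved simultaneously; the decisive observation is that any visible pair of negative slope forces $w_y$ strictly above $f(t_{\max})$, which then permits replacement by a horizontal segment.
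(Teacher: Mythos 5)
Your overall architecture (show every window is non-decreasing, note that each output segment lies on a window's supporting line, then handle the last segment separately) is reasonable and close in spirit to the paper's, but Step~1 contains a genuine gap that breaks the proof. The claim that $W_{i+1}$ ``lies on the line extending one of the two edges of $P_i$ incident to'' a reflex vertex is the structure of windows for visibility from a \emph{point}; for \emph{weak} visibility from a segment, a window can instead be supported by the inner tangent line through \emph{two} reflex vertices, one on the upper chain (a breakpoint where the slope of $f+\varepsilon$ increases) and one on the lower chain (a breakpoint where the slope of $f$ decreases). The paper's own Figure~\ref{fig:visibility} is a counterexample to your claim: $W_2$ lies on the line through $p_3^+$ and $p_6$, which is not the extension of any edge of $P_1$. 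This matters because such a tangent line can a priori be decreasing: if the upper-chain vertex $(x_u, f(x_u)+\varepsilon)$ lies to the left of the lower-chain vertex $(x_\ell, f(x_\ell))$ and $f(x_\ell) < f(x_u)+\varepsilon$, the line through them has negative slope, and monotonicity of $f$ and $f+\varepsilon$ alone does not exclude this configuration. Ruling it out is exactly the non-trivial content of the lemma. The paper does so by a perturbation argument: if $W_{j+1}$ were decreasing, its rightmost point would have to lie on the lower boundary, say at $(t,f(t))$, and the supporting line would have to touch the graph of $f$ at a second abscissa $t'<t$ --- otherwise one could shift the critical segment upward or rotate it slightly counterclockwise and connect $W_j$ to a point of $P_{j+1}$, contradicting the definition of the invisibility polygon --- and a decreasing line meeting the non-decreasing graph of $f$ at $t'<t$ forces $f(t')>f(t)$, a contradiction. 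Your Step~1 must be replaced by an argument of this kind; the conclusion you want is true, but not for the structural reason you give.

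Your Step~3 is a plausible alternative to the paper's treatment of the final segment (the paper instead chooses the visible pair maximizing the slope and shows that this segment must meet $f+\varepsilon$ at some $t$ and $f$ at some $t'<t$, forcing positive slope), and the $i=1$ subcase is handled correctly. However, as written Step~3 also leans on Step~1 (``since $W_i$ has non-negative slope, the horizontal ray meets $W_i$ only at $w$''), so it does not stand on its own until Step~1 is repaired.
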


\begin{proof}
        For the first statement, choose $q_i$ and $q_{i + 1}$ such that the
        slope of the line segment $L\coloneqq\overline{q_i q_{i + 1}}$ is maximal.
        We show that $L$ is non-decreasing.
        First note that $L$ must intersect $f + \varepsilon$ somewhere.
        Otherwise we could shift $q_{i + 1}$ upwards by some tiny amount and get a contradiction to the maximality of the slope of $L$.
        Let $t$ be maximal with $(t, f(t) + \varepsilon) \in L$.
        Then there must be some point $(t', f(t')) \in L$ with $t' < t$.
        Otherwise we could simply rotate $L$ counterclockwise around $(t, f(t) + \varepsilon)$ by some tiny angle to get a contradiction once again. 
        By monotonicity of $f$ we have $f(t') \le f(t) < f(t) + \varepsilon$. Hence, $L$ has positive slope.

        We now prove that the output $g$ is monotone and hence an ATF.
        This is clear if $g$ is returned in line 3, because then $g$ is constant.
        Otherwise, let $q_j=(t_j,g(t_j))$ for $j=1, \ldots, i + 1$ be the breakpoints of $g$ and assume, for the sake of deriving a contradiction, that there is some $j < i$ such that $g(t_j) > g(t_{j +
                1})$.

        Consider now $L \coloneqq \overline{q_j q_{j + 1}}$.
        Let $t$ be maximal with $(t, y) \in W_{j + 1}$ for some $y$.
        Since $L$ is decreasing and $L$ and $W_{j+1}$ belong to the same line, $W_{j+1}$ is also decreasing.
        Note that since $f$ and $f + \varepsilon$ are both non-decreasing and $W_{j+1}$ is decreasing, we must have $y = f(t)$. 

        By the same argument as above there must be some $t' < t$ with $(t', f(t')) \in L$ as otherwise we could shift $L$ upwards or rotate it slightly
counterclockwise to find a line segment connecting $W_j$ to a point in $P_{j+1}$ (the invisibility polygon from $W_j$ in $P_j$ containing $\overline{p_b p^+_b}$).
  But this would be a contradiction by the definition of invisibility polygons.
  Now observe that since $L$ is decreasing, we get $f(t') > f(t)$ despite
$t' < t$ which is a contradiction once more.
\end{proof}

 We conclude:

  \begin{theorem}\label{thm:simplifying_atfs}
  Given an ATF $a:[-\infty,t_{\max}]\to\mathbb{R}$ with
$b$ breakpoints and $\varepsilon>0$, we can compute an ATF
  $a':[-\infty,t_{\max}]\to\mathbb{R}$ with $a(t)\le a'(t)\le a(t)+\varepsilon$ for all $t \le t_{\max}$
  such that $a'$ has the fewest possible number of breakpoints in $O(b)$ time.
  \hfill \qed
  \end{theorem}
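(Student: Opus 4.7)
The plan is to assemble the three ingredients that have already been set up in the section: (i) Algorithm~\ref{alg:imai_iri_atfs} produces an output satisfying $f \le g \le f + \varepsilon$ on $[t_{\min}, t_{\max}]$ by construction, since every breakpoint $q_i$ is placed in the corridor $P_1$ between $f$ and $f+\varepsilon$ and consecutive breakpoints are connected by a segment that lies inside $P_1$ (by the visibility argument); (ii) the first lemma shows that no continuous piecewise linear function inside the corridor can have fewer breakpoints than the output; and (iii) the second lemma shows the output is monotone and hence a valid ATF. Together these yield correctness of the algorithm on the domain $[t_{\min},t_{\max}]$, and we extend $g$ to $(-\infty, t_{\min}]$ by the constant value $g(t_{\min})$, which is consistent with the initial-constant requirement in Definition~\ref{def:atf} and does not add a breakpoint. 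The corner case where $f(t_{\max})\le f(t_{\min})+\varepsilon$ is handled by line~\ref{algo:imai-iri-return-constant}, which returns a single-segment constant ATF that is clearly optimal.

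The remaining obligation, and really the main thing still to discharge, is the $O(b)$ running time. Here I would follow the original Imai--Iri implementation: the visibility polygons $P_i$ and their windows $W_i$ can be maintained incrementally with a sweep-line as one scans the breakpoints $p_1,\ldots,p_b$ and $p_1^+,\ldots,p_b^+$ from left to right, using a double-ended queue that stores the current upper and lower chains of the corridor. Each breakpoint is inserted and removed from this structure at most a constant number of times, giving total work $O(b)$. Computing $W_{i+1}$ from $W_i$ and $P_i$ amounts to advancing the sweep line until the current candidate ``cone'' of directions (bounded by the extreme rays from $W_i$ through the upper and lower chain) no longer contains $\overline{p_b p_b^+}$; standard amortization shows this costs $O(b)$ summed over all iterations. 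The only modification over the original Imai--Iri algorithm is in line~\ref{alg:imai_iri:pick_qi_qip1}, where we pick $q_i$ and $q_{i+1}$ so that the final segment has maximum slope among admissible choices; this can be done in $O(1)$ once the final window and the segment $\overline{p_b p_b^+}$ are known, by taking the pair of endpoints that realizes the steepest admissible line through $P_i$.

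The step I expect to require the most care is verifying that the minor changes we have introduced, namely the special handling of the unbounded left part of the domain and the maximum-slope choice in line~\ref{alg:imai_iri:pick_qi_qip1}, do not break either the linear-time sweep argument or the optimality bound. For the former, observe that the ``left boundary'' $W_1 = \overline{p_1^+ p'}$ is computed once in $O(b)$ (in fact in $O(1)$ after an initial scan for $t'$), and the subsequent loop is exactly the original Imai--Iri sweep. For the latter, the first lemma's inductive argument only uses that any feasible $g'$ must pass through each $W_i$, which is a property of the corridor and is independent of how we ultimately pick $q_i$. Hence the output has the optimum number of breakpoints, and the whole procedure runs in $O(b)$ time, yielding Theorem~\ref{thm:simplifying_atfs}.
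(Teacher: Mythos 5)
Your proposal is correct and follows essentially the same route as the paper: the theorem is obtained by assembling Algorithm~\ref{alg:imai_iri_atfs}, the optimality lemma, the monotonicity lemma, and the observation that the sweep-line implementation of Imai--Iri runs in linear time. You supply somewhat more detail on the $O(b)$ bound and on the harmless nature of the two modifications (the unbounded left end and the slope-maximal choice in line~\ref{alg:imai_iri:pick_qi_qip1}) than the paper does, but the argument is the same.
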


When applying this theorem to approximate ATFs, we choose $\varepsilon$ to be a fixed percentage (e.g., 0.5\,\%)
of the minimum travel time $\min\{a(t)-t:t\le t_{\max}\}$.

  While Algorithm~\ref{alg:imai_iri_atfs} guarantees the minimum number of breakpoints for a fixed error $\varepsilon$,
  it will often output solutions with fairly poor average error for this minimum number of breakpoints (Figure~\ref{fig:visibility} shows an example).
  Therefore, we employ a simple post-processing heuristic that moves individual breakpoints or line segments optimally,
  keeping the remaining ATF fixed and maintaining the properties guaranteed by Theorem~\ref{thm:simplifying_atfs}.

 \section{Evaluating and updating tours}
 \label{sec:upd-tours}

   A task that our overall algorithm performs very often is to insert an item (with its pickup and its delivery) into or remove an item from an existing tour.
  Even more often we want to evaluate the effect of such a change, i.e., whether it is possible at all to insert the item and if so at which cost, or
  how much cheaper a tour becomes when we remove an item.
  Only a small subset of the evaluated changes will actually be performed.
  Therefore we maintain data structures that allow fast queries and update them whenever we insert or remove an item.
  These data structures also enable us to efficiently evaluate exchanges of sequences of consecutive items between existing tours.

  Whenever we evaluate an insertion or removal of an item (or a sequence of items),
  we first compute the ATF of the resulting tour and then perform a scheduling in order to determine the cost of that tour.
  In Section~\ref{subsec:cheapest_insert}, we describe our data structure for maintaining ATFs of tours and sections of tours.
  In Section~\ref{subsec:scheduling}, we describe our scheduling algorithm.

\subsection{Maintaining arrival time functions of tours}\label{subsec:cheapest_insert}

  Recall that a tour sequence is a sequence of actions.
  For each action $i$ we have an arrival time function $a_i$.
  If we arrive at the location of action $i$ at time $t$, the time $a_i(t)$ is the earliest possible
  arrival time at the location of the following action after having performed action $i$ and traveled to the next location.
  This can include a waiting time if action $i$ cannot be performed immediately.

  Our data structure is built on ATFs $a_1, \dots, a_n$ and allows us to efficiently compute the compositions $a_{i, j} := a_{j} \circ \dots \circ a_{i+1}$ for all $0 \leq i < j \leq n$.
  While for some simple vehicle routing problems, the ATFs $a_{0,i}$ and $a_{i,n}$ for $i=1,\dots,n$ are sufficient to evaluate insertions and removals of single items, our data structure works for general pickup and delivery problems and also allows for efficiently evaluating larger changes such as swapping longer segments between tours.

  Note that if $a_i$ has $b_i$ breakpoints for each $i$, the functions $a_{i, j}$ have at most $b_{i, j} := \sum_{h=i+1}^{j} b_h$  breakpoints (cf.\ Proposition~\ref{prop:composing_atfs}).
 Therefore, we want to bound the query time of some $a_{i,j}$ roughly by $O(b_{i,j})$, but also initialization and update time for the data structure is relevant.

  A first efficient data structure was given by Visser and Spliet \cite{Visser},
  storing the actions in a binary search tree (see Figure~\ref{fig:tree_cheapest_insert}).
  Let $b := \sum_{i=1}^{n} b_i$.

   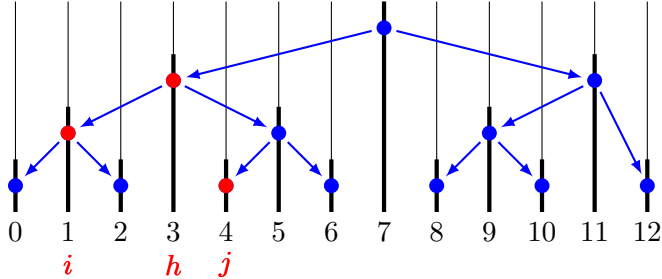
\begin{figure}[htb]
  \begin{center}
  \begin{tikzpicture}[scale=0.7]
   \foreach \x in {0, ..., 12} {
      \draw (\x,0) node[below] {\x} -- (\x,4);
   }
   \foreach \x in {7} {
      \draw[ultra thick] (\x,0) -- (\x,4);
      \node[circle, fill, blue, inner sep = 2, outer sep = 2] (p\x) at (\x,3.5) {};
   }
   \foreach \x in {3,11} {
      \draw[ultra thick] (\x,0) -- (\x,3);
      \node[circle, fill, blue, inner sep = 2, outer sep = 2] (p\x) at (\x,2.5) {};
   }
   \foreach \x in {1,5,9} {
      \draw[ultra thick] (\x,0) -- (\x,2);
      \node[circle, fill, blue, inner sep = 2, outer sep = 2] (p\x) at (\x,1.5) {};
   }
   \foreach \x in {0,2,4,6,8,10,12} {
      \draw[ultra thick] (\x,0) -- (\x,1);
      \node[circle, fill, blue, inner sep = 2, outer sep = 2] (p\x) at (\x,0.5) {};
   }
   \foreach \x/\y in {7/3,7/11, 3/1,3/5,11/9,11/12, 1/0, 1/2, 5/4, 5/6, 9/8, 9/10} {
      \draw[->, >=latex, thick, blue] (p\x) -- (p\y);
   }
   \foreach \x in {1,4,3}{
     \node[circle, fill, red, inner sep = 2, outer sep = 2] at (p\x) {};
   \node[red, thick] () at (1,-1) {$i$};
   \node[red, thick] () at (4,-1) {$j$};
   \node[red, thick] () at (3,-1) {$h$};
      }
  \end{tikzpicture}
  \end{center}
  \caption{A balanced binary search tree.
  To compute $a_{i,j}$ we find the lowest common ancestor $h$ of $i$ and $j$ and compose $a_{i,h}$ and $a_{h,j}$.
  These ATFs are precomputed because $h$ is an ancestor of $i$ and $j$.
  \label{fig:tree_cheapest_insert}
  }
  \end{figure}

  \begin{theorem}[Visser, Spliet \cite{Visser}]\label{thm:bst_data_structure}
  With $O(n \log n)$ compose operations and $O(nb)$ total initialization time, we can construct a data structure that allows the computation of any $a_{i, j}$
   for $0 \leq i < j \leq n$ in $O(b_{i,j} + \log n)$ time with at most one compose operation.
  \end{theorem}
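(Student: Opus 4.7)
The plan is to store the actions $1, \ldots, n$ in a balanced binary search tree $T$ of depth $O(\log n)$, keyed by position in the tour, and to precompute at each internal node $h$ the two families of composite ATFs $\{a_{i,h} : i \text{ in left subtree of } h\}$ and $\{a_{h,j} : j \text{ in right subtree of } h\}$. I would build each family incrementally by a single application of Proposition~\ref{prop:composing_atfs} per new element: on the left, initialize $a_{h-1,h} := a_h$ and obtain $a_{i-1,h} = a_{i,h} \circ a_i$; the right side is symmetric. Because $T$ is a BST, the left subtree of $h$ is a contiguous range $\{l,\ldots,h-1\}$ and the right subtree is a contiguous range $\{h+1,\ldots,r\}$, so this incremental scheme is well-defined.

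For a query $0 \le i < j \le n$, I first descend $T$ from the root to find the lowest common ancestor $h$ of $i$ and $j$ in $O(\log n)$ time using the standard BST rule (go left while both keys are smaller than the current one, right while both are larger, stop otherwise). If $h=i$ then $j$ lies in the right subtree of $h$ and $a_{i,j}=a_{h,j}$ is already precomputed; symmetrically if $h=j$. Otherwise $i$ is in the left subtree and $j$ in the right subtree of $h$, and I return $a_{i,j}=a_{h,j}\circ a_{i,h}$ by a single composition. By Proposition~\ref{prop:composing_atfs} this takes $O(b_{i,h}+b_{h,j})=O(b_{i,j})$ time, so the total query time is $O(b_{i,j}+\log n)$ with at most one composition call, as required.

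The main obstacle is the analysis of the initialization costs. The $O(n\log n)$ composition count is the easy half: each node $h$ at depth $d$ has subtrees of size at most $n/2^d$ and therefore triggers at most $n/2^d$ incremental compositions on each side, and summing over the $2^d$ nodes at depth $d$ and then over $d=0,\ldots,O(\log n)$ yields $\sum_{d} 2^d\cdot O(n/2^d)=O(n\log n)$. The harder part is the $O(nb)$ wall-clock bound. By Proposition~\ref{prop:composing_atfs}, the $i$-th incremental step at a node $h$ whose left subtree is $\{l,\ldots,h-1\}$ costs $O(b_{i,h}+b_i)=O(b_{i-1,h})$, so the total work done inside the left subtree of $h$ telescopes to
\[
 \sum_{i=l}^{h-1} b_{i,h} \;=\; \sum_{k=l+1}^{h}(k-l)\,b_k \;\le\; \frac{n}{2^d}\sum_{k=l+1}^{h}b_k,
\]
using that $h-l\le n/2^d$. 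Since the subtrees at a fixed depth $d$ are disjoint ranges of actions, summing the right-hand side over all $h$ at depth $d$ is bounded by $(n/2^d)\cdot b$, and the geometric series $\sum_{d\ge 0}(n/2^d)\,b=O(nb)$ closes the bound; the right subtrees are handled identically. The only real subtlety is this bookkeeping that turns the telescoping sum into the claimed $O(nb)$ preprocessing time.
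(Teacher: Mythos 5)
Your proposal is correct and follows essentially the same route as the paper: a balanced binary search tree with ancestor--descendant compositions precomputed incrementally (one composition per new descendant), an LCA-based query with a single final composition, and a per-level accounting that exploits the disjointness of subtree ranges to get $O(n\log n)$ compositions and $O(nb)$ total time. The only cosmetic difference is that the paper bounds each composition at a node crudely by $O(b_{p,q})$ rather than via your telescoping sum, and it builds the tree on $\{0,\dots,n\}$ (you should include index $0$ so that queries $a_{0,j}$ are covered).
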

  \begin{proof}
   We build a balanced binary search tree on $\{0, \dots, n\}$ and precompute all $a_{i,j}$ such that $i$ is an ancestor or a descendant of $j$ in that tree.
We say that a vertex is at level $k$ if it has distance $k$ from the root in that tree.
The tree has levels $k=0,\dots,\lfloor \log_2(n+1) \rfloor$.

For each level $k$ we consider all vertices at level $k$, and for each such vertex $h$ and each descendant $i$ of $h$ we compute $a_{h,i}$ (if $h<i$) or $a_{i,h}$ (if $i<h$).
For a fixed vertex $h$, let $p$ and $q$ be the smallest and largest descendant of $h$, respectively.
Then we compute the functions $a_{i,h}$ for $p\le i <h$ in the order of decreasing $i$ and we compute the functions $a_{h,i}$ for $h < i \le q$ in the order of increasing $i$.
This requires a single composition for the computation of each of these functions and hence we need in total $q-p$ compositions for each fixed vertex $h$.
Because the intervals $[p,q]$ are disjoint for the different vertices $h$ on the same level $k$, we need a total of at most $n$ compose operations for each level.

For computing a single ATF $a_{h,i}$ (for $h<i$; the case $i<h$ is symmetric), the computing time is $b_{h,i}$
because $a_{h,i}$ is computed by composing $a_{h,i-1}$ (with at most $b_{h,i-1}$ breakpoints) and $a_i$ (with at most $b_{i}$ breakpoints),
leading to a running time of $O(b_{h,i-1} + b_{i}) = O(b_{h,i})$, which is $O(b_{p,q})$.
For each vertex $h$,  we compute at most $\frac{n}{2^{k-1}}$ such functions.
Therefore, because the intervals $[p,q]$ are disjoint for the different vertices $h$ at level $k$, the total running time at level $k$ is $O(\frac{nb}{2^k})$.

Given some $0 \leq i < j \leq n$, we can find the lowest common ancestor $h$ of $i$ and $j$ in $O(\log n)$ time and compute $a_{i,j}$ by composing $a_{i, h}$ and $a_{h, j}$.
  \end{proof}

When one ATF $a_i$ changes, we can update this data structure with $O(n)$ compose operations: we only need to update the functions $a_{h,j}$ where $h < i \le j$ and $h$ is an ancestor or descendant of $j$.
Then the search tree property implies that $h$ or $j$ is an ancestor of $i$.
The ancestor of $i$ at level $k$ has at most $\frac{n}{2^{k-1}}$ descendants.
This implies that we need to update at most $4n$ functions $a_{h,j}$.

When an action $i$ is removed from the tour, we can replace $a_i$ by the identity function and apply the above update.
When a new action is inserted, we can insert a node into the search tree without rebalancing, and only after $O(\log n)$ insert or removal steps we recompute the data structure from scratch.
This does not change any of the above-mentioned running time bounds, but saves time in practice.

In practice it is also useful to store the ATFs $a_{0,i}$ and $a_{i,n}$  for all $i=1,\dots,n-1$ because these initial and final sequences are queried often.
Again, this does not increase any of the above running time bounds.

  If we allow more than one compose operation for a query, the initialization time (and the update time) can be improved further by building the data structure on small segments separately.

  \begin{theorem}\label{thm:bst_data_structure_iterative}
   Let $k \in \mathbb{N}$.
   With $O(n \log n)$ compose operations and $O(k n^{\frac{1}{k}} b)$ total initialization time, we can construct
   a data structure that allows the computation of any $a_{i, j}$ for $0 \leq i < j \leq n$ in $O(b_{i, j} \log (k+1) + \log n)$ time with at most $2k-1$ compose operations, and at most $k-1$ compose operations if $i=0$ or $j=n$.
  \end{theorem}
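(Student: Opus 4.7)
The plan is to apply Theorem~\ref{thm:bst_data_structure} recursively. I would partition the $n+1$ positions into $N := \lceil n^{1/k} \rceil$ consecutive chunks, each of size at most $M := \lceil n^{(k-1)/k} \rceil$, and recursively construct the data structure with parameter $k-1$ inside each chunk; this in particular precomputes the composed ATF of each whole chunk. At the outermost level I would apply Theorem~\ref{thm:bst_data_structure} to the $N$ chunks, treating each chunk's composed ATF as a single action, which supports one-compose queries between any two chunk boundaries.

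To answer a query $a_{i,j}$ whose endpoints lie in different chunks $c_i < c_j$, I would split $[i,j]$ at the right boundary $p$ of $c_i$ and the left boundary $q$ of $c_j$, obtain $a_{i,p}$ and $a_{q,j}$ as (right- and left-) endpoint queries inside $c_i$ and $c_j$ with parameter $k-1$, obtain $a_{p,q}$ with a single top-level query from Theorem~\ref{thm:bst_data_structure}, and finally combine the three; if $i$ and $j$ lie in the same chunk I simply recurse. Letting $E(k)$ and $G(k)$ denote the number of compose operations in an endpoint and a general query, this gives $E(k)\le E(k-1)+1$ and $G(k)\le 2E(k-1)+3$ with bases $E(1)=0$ and $G(1)=1$, yielding $E(k)=k-1$ and $G(k)=2k-1$ as claimed.

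For the initialization, I would induct on $k$. The outer level costs $O(N\cdot b)=O(n^{1/k}b)$ by Theorem~\ref{thm:bst_data_structure}, using that the chunk ATFs have at most $b$ breakpoints in total. By the inductive hypothesis, each chunk with $b_m$ breakpoints costs $O((k-1)\,M^{1/(k-1)}\,b_m)=O((k-1)\,n^{1/k}\,b_m)$, which sums to $O((k-1)\,n^{1/k}\,b)$ over all chunks, so the grand total is $O(k\,n^{1/k}\,b)$. The compose count for initialization satisfies $C(k,n)=O(n^{1/k}\log n)+N\cdot C(k-1,M)=O(n\log n)$, and navigation during a query contributes at most $O(\log n^{1/k})$ at each of the $k$ levels, summing to $O(\log n)$.

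The main obstacle is obtaining the query time $O(b_{i,j}\log(k+1)+\log n)$ rather than $O(b_{i,j}\,k+\log n)$: if one composes intermediate ATFs as they bubble up through the recursion, each of the $k$ levels adds $\Theta(b_{i,j})$ to the running time. To avoid this, I would defer the composes, collecting the $O(k)$ atomic precomputed ATFs that appear in the decomposition of $a_{i,j}$ without combining them, and only then compose all of them together in a balanced binary-tree order as in Corollary~\ref{cor:composing_atfs}. Since these atomic ATFs partition $[i,j]$ and hence have at most $b_{i,j}+O(k)$ breakpoints in total, each breakpoint participates in $O(\log k)$ compose operations, giving the desired $O(b_{i,j}\log(k+1))$ bound on the composition time.
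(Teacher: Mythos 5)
Your proposal is correct and essentially matches the paper's proof: the same $k$-level decomposition with branching factor $n^{1/k}$, prefix/suffix ATFs stored so that endpoint queries cost $k-1$ composes, and the deferred, balanced-order composition via Corollary~\ref{cor:composing_atfs} to obtain the $\log(k+1)$ factor. The only (cosmetic) difference is the orientation of the recursion --- you place the single-level structure of Theorem~\ref{thm:bst_data_structure} at the coarsest level and recurse with parameter $k-1$ inside chunks of size $n^{(k-1)/k}$, whereas the paper builds it on chunks of size $n^{1/k}$ and applies the induction hypothesis to the $n^{(k-1)/k}$ composed chunk ATFs; unrolled, both give the same data structure and identical recurrences.
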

  \begin{proof}
   We prove the statement by induction on $k$.
   The case $k=1$ follows from Theorem~\ref{thm:bst_data_structure} and by precomputing the ATFs $a_{0,i}$ and $a_{i,n}$ as described above.

   For larger $k$, we choose $p := \lceil n^{\frac{1}{k}} \rceil$ and $m:= \lceil\frac{n}{p}\rceil$.
   First, we build the data structure from Theorem~\ref{thm:bst_data_structure} on each of the sets $A_i := \{ a_{(i-1)p+1}, \dots, a_{ip} \}$ for $i=1,\ldots,m-1$ and $A_m := \{ a_{(m-1)p+1}, \dots, a_n\}$.
   We also store the ATFs of the initial and final sequences within each $A_i$.

   Second, we apply the induction hypothesis to construct a top-level data structure on ATFs $a_i'$, which are the compositions of the ATFs in $A_i$.
   More precisely, $a'_i :=  a_{(i-1)p, ip}$ for $i=1,\dots,m-1$ and $a'_m :=  a_{(m-1)p, n}$.

   The data structures for the $A_i$ can be initialized with $O(n \log p)=O(\frac{1}{k}n\log n)$ compose operations and total time $O(p b)= O(n^{\frac{1}{k}}b)$.
By the induction hypothesis, the data structure for $a_1',\dots, a_m'$ can be initialized with $O(m\log m)=O(\frac{k-1}{k}n\log n)$ compose operations
in total time $O((k-1)m^{\frac{1}{k-1}}b)= O((k-1)n^{\frac{1}{k}}b)$.
Since the constant hidden in the $O$-notation is always the same, this implies the claimed bounds for the initialization.

   For the query of some $a_{i,j}$ where $a_{i+1}$ and $a_j$ are not in the same set $A_h$, we set $s:=\lceil \frac{i}{p}\rceil$ and $t:=\lfloor \frac{j}{p}\rfloor$.
   Note that $a_{i,j}$ is the composition of the ATFs $a_{i,sp}$ and $a_{sp,tp}= a'_{s,t}$ and $a_{tp,j}$.
   By the induction hypothesis we can write $a'_{s,t}$ as the composition of $2(k-1)$ precomputed ATFs.
   The ATFs $a_{i,sp}$ and $a_{tp,j}$ are precomputed in the data structures on the sets $A_s$ and $A_{t+1}$.
   Hence we can write $a_{i,j}$ as the composition of $2k$ precomputed ATFs.
   We can find these in $O(k+\log p)=O(\log n)$ time.
   Since they have a total of $b_{i,j}$ breakpoints, the running time follows from Corollary~\ref{cor:composing_atfs}.

   For querying $a_{0,j}$, the induction hypothesis yields that $a_{0,tp}$ is the composition of $k-1$ precomputed ATFs, and $a_{tp,j}$ is precomputed in the data structure on the set $A_{t+1}$.
   So we need to compose $k$ ATFs, and the same holds for querying $a_{i,n}$ by symmetry.
  \end{proof}

This multi-level data structure also allows for lazy update, which can save time in practice.
Here is an example how our data structure allows fast evaluations:

\begin{corollary}
Let $k \in \mathbb{N}$.
 Given the data structure of Theorem~\ref{thm:bst_data_structure_iterative}, it takes at most $4k+3$ compose operations
and $O(b\log k)$ total time
 to compute the total arrival time function of a tour after inserting the pickup action and the delivery action of an item
 at given positions, where $b$ is the total number of breakpoints of the ATFs of all actions in the resulting tour.
\end{corollary}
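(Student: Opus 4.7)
The plan is to express the new tour's ATF as a composition of a bounded number of ATFs, each of which is either queried from the data structure of Theorem~\ref{thm:bst_data_structure_iterative} or easily built from the inserted actions, and then to invoke Corollary~\ref{cor:composing_atfs} to compose them efficiently.

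First I would set up the decomposition. Assume the pickup is inserted right after action $i$ and the delivery right after action $j$, with $0 \le i < j \le n$. The ATF of the new tour is the composition of up to seven pieces: an initial segment $a_{0,i-1}$; a modified ATF $\tilde a_i$ performing action $i$ and travelling to the pickup location (instead of to the location of action $i+1$); the pickup ATF $a_p$ including travel to the location of action $i+1$; a middle segment $a_{i+1,j-1}$; a modified ATF $\tilde a_j$ for action $j$ with travel to the delivery location; the delivery ATF $a_d$ including travel to the location of action $j+1$; and a final segment $a_{j+1,n}$. In edge cases ($i = 0$, $j = n$, or $j = i+1$) some of the three queried segments become trivial and can be dropped, which only decreases the counts; so it suffices to treat the generic case.

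Next I would count compose operations. By Theorem~\ref{thm:bst_data_structure_iterative}, each of $a_{0,i-1}$ and $a_{j+1,n}$ is returned as a composition of at most $k$ precomputed ATFs (i.e., using at most $k - 1$ compose operations each), and $a_{i+1,j-1}$ as a composition of at most $2k$ precomputed ATFs. Adding in the four explicitly available ATFs $\tilde a_i, a_p, \tilde a_j, a_d$, the total ATF is written as a composition of at most $k + 2k + k + 4 = 4k + 4$ ATFs, requiring exactly $4k + 3$ compose operations in total.

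Finally I would bound the running time. Each precomputed ATF returned by the data structure is itself the composition of a disjoint subsequence of the original action ATFs, so by Proposition~\ref{prop:composing_atfs} the breakpoint counts of the $4k + 4$ constituents sum to at most $b$. Applying Corollary~\ref{cor:composing_atfs} to this sequence then yields a total running time of $O(b \log(4k+4)) = O(b \log k)$. The only real obstacle is picking the granularity of the decomposition so that the three large queried segments align cleanly with the data structure's precomputed pieces and the modified ATFs $\tilde a_i, \tilde a_j$ slot in as single constituents; once this is set up, both the $4k+3$ count and the $O(b \log k)$ bound follow immediately from the two cited results.
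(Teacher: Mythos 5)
Your proof is correct and follows essentially the same route as the paper: the same seven-piece decomposition into two boundary queries, one middle query, and the four ATFs involving the inserted pickup and delivery, with the same count of $4k+4$ constituents hence $4k+3$ compositions, and the same appeal to Corollary~\ref{cor:composing_atfs} for the time bound. The only discrepancy is notational: in the paper's convention $a_{i,j}=a_j\circ\dots\circ a_{i+1}$, so the middle and final segments should be written $a_{i,j-1}$ and $a_{j,n}$ rather than $a_{i+1,j-1}$ and $a_{j+1,n}$.
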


\begin{proof}
Let $a_1,\ldots,a_n$ be the ATFs of the $n$ actions of the current tour sequence.
If we insert the pickup action after action $i$ and the delivery action after action $j$,
we need to compose $a_{j,n} \circ a_d \circ a_j' \circ a_{i,j-1} \circ a_p \circ a_i' \circ a_{0,i-1}$,
where $a_i'$ is the arrival time function for performing action $i$ and then traveling to the pickup position,
$a_j'$ is the arrival time function for performing action $j$ and then traveling to the delivery position,
$a_p$ is the arrival time function for performing the pickup action and then traveling to position of action $i+1$,
and $a_d$ is the arrival time function for performing the delivery action
and then traveling to the position of action $j+1$.
All of these four ATFs are precomputed.
By Theorem~\ref{thm:bst_data_structure_iterative}, computing $a_{0,i-1}$ and $a_{j,n}$ requires $k-1$ compose operations each, and computing $a_{i,j-1}$ requires $2k-1$ compose operations.
Hence the total number of compose operations is $4k+3$.
The running time bound follows from Corollary~\ref{cor:composing_atfs}.
\end{proof}

  \subsection{Tour scheduling}\label{subsec:scheduling}

  In the previous section we have discussed how we can efficiently maintain the total arrival time function $a \colon (-\infty, t_{\max}] \rightarrow \mathbb{R}$ for every tour. In order to evaluate the quality of a tour, we will frequently need to find the optimal starting time $t_0$ at which the tour should start.
The problem of computing $t_0$ has been called the \emph{Optimal Starting Time Problem} in \cite{HASHIMOTO2008434}.

The total cost of a tour depends on various properties of the tour, many of which depend on the start time.
Examples are the tour duration and the total driving distance.
The driving distance depends on the start time of the tour because different paths are optimal at different times; see Figure~\ref{fig:alternative_fastest_paths} on page~\pageref{fig:alternative_fastest_paths}.
If the cost is proportional to the driving distance, the cost is a piecewise constant lower semi-continuous function of the start time of the tour.
(At breakpoints where two paths result in the same arrival time, we take the cheaper path; this guarantees the lower semi-continuity.)

In general we associate a piecewise constant, lower semi-continuous cost function $c_f$ with every ATF $f$ that we store.
Whenever we compose two ATFs $f$ and $g$, we compute the cost function associated with the composition via $c_{g \circ f} = c_g \circ f + c_f$.
The number of discontinuities of $c_{g \circ f}$ is at most the sum of the numbers of discontinuities of $c_f$ and $c_g$.

These piecewise constant cost functions can also be used to model further cost components, e.g., tolls or robustness costs for performing a pickup or delivery close to the end of its time window.
For instance, we might impose an artificial penalty for arriving within the last 10 minutes of a time window.
Such soft time windows will make the schedule more robust as we will see in Section~\ref{sec:results-new-benchmarks}.

The most important component of the cost function is the tour duration $a(t_0) - t_0$, where $a$ is the ATF of the tour.
To model overtime costs, we allow for a piecewise linear, continuous, and non-decreasing function $c_{\mathrm{ot}} : [0, \infty) \rightarrow \mathbb{R}$ and describe the resulting cost by $c_{\mathrm{ot}}(a(t_0)-t_0)$.
If the cost is proportional to the tour duration, $c_{\mathrm{ot}}$ is a linear function.
In practice, $c_{\mathrm{ot}}$ has very few breakpoints.

Finally, the work time cost may depend on the time of the day.
For example drivers might be paid more if they are scheduled to work outside of normal working hours.
We describe the cost per hour by a piecewise constant function $c_{\mathrm{wt}} : [t_{\min}, t_{\max}] \rightarrow \mathbb{R}$ and pay
  \[
\int_{t_0}^{a(t_0)}{c_{\mathrm{wt}}(t) \, \mathrm{d} t}
  \]
if we start at time $t_0$.

Combining the different cost components, our goal is to compute a $t_0 \in [t_{\min}, t_{\max}]$ minimizing the total cost
  \[
      c_{\mathrm{tot}}(t_0) \coloneqq c_a(t_0) + c_{\mathrm{ot}}(a(t_0) - t_0) +
\int_{t_0}^{a(t_0)}{c_{\mathrm{wt}}(t) \, \mathrm{d} t}.
  \]
Note that most of the cost components in this function can also be modeled in the cost model in \cite{HASHIMOTO2008434}.
  The algorithm proposed in \cite{HASHIMOTO2008434} also optimizes waiting time if this reduces the cost, while in our setting we allow waiting only when we arrive at a stop before the time window begins.
This seems a reasonable assumption in practice and allows us to solve the Optimal Starting Time Problem much faster:

  \begin{theorem}
      Let $a$, $c_a$, $c_{\mathrm{ot}}$ and $c_{\mathrm{wt}}$ have $b_a$, $b_{c_a}$, $b_{c_{\mathrm{ot}}}$ and $b_{c_{\mathrm{wt}}}$ breakpoints respectively.
      Then we can compute the least $t_0 \in [t_{\min}, t_{\max}]$ minimizing $c_{\mathrm{tot}}$ in $O(b_{c_{\mathrm{ot}}} b_a + b_{c_a} + b_{c_{\mathrm{wt}}})$ time.
  \end{theorem}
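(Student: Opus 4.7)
The plan is to compute $c_{\mathrm{tot}}$ explicitly as a sum of a piecewise linear function and a piecewise constant, lower semi-continuous function, show that its total number of breakpoints is $O(b_{c_{\mathrm{ot}}} b_a + b_{c_a} + b_{c_{\mathrm{wt}}})$, and then minimize by evaluating at each breakpoint. The three additive components of $c_{\mathrm{tot}}$ will be handled separately, and the main obstacle is the $c_{\mathrm{ot}}(a(t_0)-t_0)$ term because $a(t)-t$ is not monotone in $t$, so the composition is not a standard composition of monotone piecewise linear functions covered by Proposition~\ref{prop:composing_atfs}.

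For the integral term, I would first compute the antiderivative $C_{\mathrm{wt}}(t) := \int_{t_{\min}}^{t} c_{\mathrm{wt}}(s)\,\mathrm{d}s$ in $O(b_{c_{\mathrm{wt}}})$ time; this is piecewise linear, continuous, and non-decreasing with $b_{c_{\mathrm{wt}}}$ breakpoints. Then $\int_{t_0}^{a(t_0)} c_{\mathrm{wt}}(s)\,\mathrm{d}s = C_{\mathrm{wt}}(a(t_0)) - C_{\mathrm{wt}}(t_0)$. Since $a$ is non-decreasing, the composition $C_{\mathrm{wt}}\circ a$ can be computed by a single sweep (exactly as in Proposition~\ref{prop:composing_atfs}) in $O(b_a + b_{c_{\mathrm{wt}}})$ time, yielding a piecewise linear function with $O(b_a + b_{c_{\mathrm{wt}}})$ breakpoints. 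Subtracting $C_{\mathrm{wt}}(t_0)$ gives another piecewise linear function with the same asymptotic number of breakpoints.

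For $c_{\mathrm{ot}}(a(t_0)-t_0)$, I would iterate over the at most $b_a$ affine pieces of $a(t)-t$. On each such piece, $a(t)-t$ is an affine function of $t$, hence monotone (or constant), and its range is an interval $J$. Using a sweep through the breakpoints of $c_{\mathrm{ot}}$ that fall in $J$ (maintaining a pointer into $c_{\mathrm{ot}}$'s breakpoint list that advances monotonically across pieces when they overlap, or restarting it in $O(b_{c_{\mathrm{ot}}})$ time per piece in the worst case), we obtain a piecewise linear description of $c_{\mathrm{ot}}(a(t)-t)$ on that piece with $O(b_{c_{\mathrm{ot}}})$ breakpoints. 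Concatenating over all $b_a$ pieces gives a piecewise linear function with $O(b_a b_{c_{\mathrm{ot}}})$ breakpoints in $O(b_a b_{c_{\mathrm{ot}}})$ time. This dominates the running time and accounts for the $b_{c_{\mathrm{ot}}} b_a$ factor in the claim.

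Finally, I would merge the breakpoint lists of $c_a$, the composition $c_{\mathrm{ot}}\circ(a-\mathrm{id})$, and the integral term into a common refinement with $N := O(b_{c_{\mathrm{ot}}} b_a + b_{c_a} + b_{c_{\mathrm{wt}}})$ subintervals; on each subinterval $c_{\mathrm{tot}}$ is the sum of a constant (from $c_a$) and an affine function (from the other two terms). An affine function on a closed interval attains its minimum at an endpoint, and lower semi-continuity ensures that the infimum of $c_{\mathrm{tot}}$ on $[t_{\min},t_{\max}]$ is in fact attained. Thus evaluating $c_{\mathrm{tot}}$ at each of the $N$ breakpoints (using the lower semi-continuous value of $c_a$) and selecting the least $t_0$ achieving the minimum finishes the algorithm in the required $O(b_{c_{\mathrm{ot}}} b_a + b_{c_a} + b_{c_{\mathrm{wt}}})$ total time.
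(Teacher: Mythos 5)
Your proposal is correct and follows essentially the same route as the paper: both arguments reduce to the observation that the breakpoints of $c_{\mathrm{tot}}$ arise only from breakpoints of $c_a$ and $a$, from $b_{c_{\mathrm{ot}}}$ breakpoints per affine piece of $a(t)-t$ (hence $b_{c_{\mathrm{ot}}}b_a$ in total), and from $O(b_{c_{\mathrm{wt}}})$ crossings of discontinuities of $c_{\mathrm{wt}}$ at $t$ and at $a(t)$ thanks to the monotonicity of $a$, after which one evaluates the lower semi-continuous piecewise linear function at each candidate point. The only difference is presentational: you explicitly construct the three components and merge their breakpoint lists, whereas the paper performs a single incremental event sweep; your closing remark that lower semi-continuity forces the least minimizer to be one of these breakpoints is a point the paper leaves implicit.
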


  \begin{proof}
      Note that $c_{\mathrm{tot}} : [t_{\min}, t_{\max}] \rightarrow
\mathbb{R}$ is itself a piecewise linear, lower semi-continuous function.
      So in order to find its minimum, it suffices to construct all of its breakpoints.
      To do this we perform a linear scan in the interval $[t_{\min}, t_{\max}]$.

      Starting with $t = t_{\min}$, we may compute $c_{\mathrm{tot}}(t) = c_{\mathrm{tot}}(t_{\min})$ easily in $O(b_a + b_{c_a} + b_{c_{\mathrm{ot}}} + b_{c_{\mathrm{wt}}})$ time.
      Now as $t$ increases, $c_{\mathrm{tot}}$ can have a breakpoint at
$t$ only if one of the following events occurs:
      \begin{enumerate}
          \item $c_a$ has a discontinuity at $t$,
          \item $a$ has a breakpoint at $t$,
          \item $c_{\mathrm{ot}}$ has a breakpoint at $a(t) - t$,
          \item $c_{\mathrm{wt}}$ has a discontinuity at $t$, or
          \item $c_{\mathrm{wt}}$ has a discontinuity at $a(t)$.
      \end{enumerate}

      Given $t$ and the relevant segments of $a$, $c_a$, $c_{\mathrm{ot}}$ and $c_{\mathrm{wt}}$, we can find the next of these events in constant time.
      Note that event~1 can occur $b_{c_a}$ times, event~2 can occur $b_a$ times, event~3 can occur $b_{c_{\mathrm{ot}}} b_a$ times, and events~4 and~5 can each occur $b_{c_{\mathrm{wt}}}$ times since $a$ is non-decreasing.
      So we can iterate through all breakpoints of $c_{\mathrm{tot}}$ in $O(b_{c_{\mathrm{ot}}} b_a + b_{c_a} + b_{c_{\mathrm{wt}}})$ time to find the
first minimizing breakpoint.
  \end{proof}

  In practice, both $c_{\mathrm{ot}}$ and $c_{\mathrm{wt}}$ tend to have very few breakpoints.
  So the running time tends to be effectively $O(b_a + b_{c_a})$.
  By approximating both $a$ and $c_a$ during the algorithm, it
is thus possible to query the optimal starting time of a tour and hence its total cost very efficiently.

\section{Constructing and improving solutions}\label{sec:our_vrp_algorithm}

In this section we explain our overall algorithm for vehicle routing with time-dependent travel times.
Of course it will call the operations described in the previous sections many times.

\subsection{Preprocessing the road network and computing fastest paths}\label{sec:preprocessing}

Before running our main algorithm, we have a preprocessing step that prepares the road network information.
This step is independent of the addresses and thus needs to be run again only when the road network or the vehicle fleet changes.
Input to preprocessing is a road network with all relevant information on restrictions and travel times.
This is modeled as a graph with undirected and directed edges.
For each edge and each vehicle type the input contains an ATF, which is infinite if the vehicle is not allowed on the road corresponding to that edge.
We end up with a separate graph for each vehicle type.
Turn restrictions (forbidden sequences of vertices) are modeled by making copies of certain vertices, depending on the predecessor(s) on a path.
With each ATF we also store a lower and an upper bound on the travel time in order to speed up certain operations later.
We also approximate ATFs at several stages.

Next, we compute a \emph{contraction hierarchy}, similarly to \cite{Batz}.
In this step, a total order of all vertices of a graph is computed.
If $v_1,\ldots,v_n$ is this order, we process $k=1,\ldots,n$ and add a shortcut edge from $v_i$ to $v_j$
(with composed ATFs) whenever $i>k$ and $j>k$ and there are edges (including previously added shortcut edges)
from $v_i$ to $v_k$ and from $v_k$ to $v_j$
such that these (might) form a fastest path from $v_i$ to $v_j$ at some time of the day (here we use lower and upper bounds).
This guarantees that there is always a fastest path whose vertex indices are first increasing and then decreasing (an \emph{up-down path}).
Occasionally, a slower path could be cheaper than the fastest path, but this effect is negligible in practice because drivers typically choose a fastest path anyway.
The contraction hierarchy typically roughly doubles the number of edges but allows for faster fastest-path queries later.
The total preprocessing time is less than 10 minutes on typical instances up to 8 million road segments and up to 2\,000 start and end addresses
(including those used to generate our benchmarks described in Section~\ref{sec:new-benchmarks})
on an AMD EPYC 7742 server. This  divides into up to 6 minutes for computing the contraction hierarchy and
up to 4 minutes for computing the matrix of arrival time functions for each pair of addresses.

When processing an actual instance with pickup and delivery addresses given by geographic coordinates, 
we first map each address to the nearest well-accessible road segment. We add a copy of the corresponding edge, subdivide it,
and put the new vertex at the beginning of our order. Keeping the original edge ensures that we still have 
a fastest up-down path for every pair of vertices.
For each such vertex $v$ we then compute a forward cone and a backward cone (as suggested by Geisberger and Sanders \cite{Geisberger2010engineering}).
The forward cone contains all vertices which are reachable from $v$ via a path with increasing indices ($i< j$ for each edge $(v_i, v_j)$ on the path) that is fastest at some time.
The backward cone contains all vertices from which $v$ is reachable via a path with decreasing indices ($i> j$ for each edge $(v_i, v_j)$ on the path) that is fastest at some time.
Moreover we store the minimum of the composed ATFs of such paths at the elements of the cones
and approximate the resulting ATF as described in Section~\ref{sec:atf}. Most cones contain a few hundred vertices and
can be computed in a few milliseconds.

When asking for a fastest path from $v$ to $w$ (a frequently needed operation), we intersect the forward cone of $v$
with the backward cone of $w$ and sort the intersection according to the sum of the two minimum travel times (a lower bound).
To compute the total ATF from $v$ to $w$, we scan the intersection in that order, compose the two ATFs and take
the minimum with the previous total ATF. We can stop when the lower bound exceeds the current maximum travel time
(which happens after fewer than ten steps normally). The total running time for this step is less than a tenth of a millisecond on average.
For each segment of the ATF we also store the time-independent part of the cost of the corresponding path (e.g., cost proportional to the driving distance).
The actual path is not stored, instead it is simply recomputed in the very end when all tour sequences are fixed;
then we also recompute optimal parking positions.

\subsection{Constructing tours}\label{subsec:clustering}

Part of an instance is a list of vehicles and a list of items to be shipped.
Each vehicle has a start address, an end address, capacities in several dimensions (e.g., weight and size), and characteristics
that define the road restrictions and speed profiles. It is also possible that start and/or end addresses can be chosen freely.
Each vehicle (type) can have a setup cost, a cost per driving distance, and a cost per used time.
Moreover, a vehicle (with a driver) is available within a certain time period, for a certain duration, possibly with an overtime option at an extra cost.
We can also have a maximum distance that an (electric) vehicle can go.
Finally, a vehicle can have sorting rules, for example that after leaving the depot all items in the vehicle must be delivered before any new pickup is allowed.

Each item has a pickup address, a pickup time window, a delivery address, and a delivery time window. Moreover, the set of
vehicles that can ship an item can be restricted. There is a time required for pickup and for delivery, and an extra time for parking:
thus we can save some time by parking only once for several pickups or deliveries nearby. These times can also depend on the vehicle type.
Finally, an item comes with a penalty that we have to pay in case we decide not to ship it.
This feature is mainly used in the case when the number of vehicles is not sufficient to handle all items; then some items can be prioritized.

In a preprocessing step we identify pairs of items that seem to fit well in the same tour (\emph{friends}),
i.e., shipping both does not cost much more than shipping one. In case we are almost sure that they
should be handled by the same vehicle (e.g., same pickup and same delivery address and compatible time windows),
we treat them as a single item, with a single parking position.
In other cases we store the friendship information and use it at several places as described below.
Moreover we greedily select a subset of items that are important (i.e., we should handle them, but it is not easy to handle them)
and not at all friends of each other.
These will serve as seeds: we open a separate tour for each seed item as a starting point of our tour construction.

Then we insert the remaining items one by one. We always guarantee to maintain a feasible solution, except that not all items belong to a tour (yet).
In each step we pick an item whose \emph{average regret} is maximum,
i.e., the difference of the expected cost of inserting it into a random tour and the cost of inserting it into the best tour.
This is similar to the approach proposed in \cite{FoisyPotvin}.
For measuring the insertion cost, we use the algorithm described in Subsection~\ref{subsec:cheapest_insert},
followed by the scheduling as described in Subsection~\ref{subsec:scheduling}.
If inserting an item into a tour is infeasible (e.g., due to violated capacity constraints or time windows),
the corresponding cost is replaced by the cost of shipping the item by a new separate tour (or a large constant if no further vehicle is available).
When we detect that more tours are needed than we have, we compute new seed tours.

Moreover, at certain stages, in particular after creating new seed tours, and more extensively at the very end,
we perform local search operations (see the next subsection) to optimize the existing tours.
After having computed different candidate solutions with parallel threads, making different random choices at various places,
we continue post-optimizing the best of these solutions in the end with all threads.

\subsection{Post-optimization by local search operations}

We implemented various local search operations, most of which are well-known from the vehicle routing literature.
The basic operations used by all of them are those described in Section~\ref{sec:upd-tours},
but we apply them often to contiguous sequences of items in a tour rather than single items.
Formally, by a \emph{sequence} we mean a set of items in a tour for which all their pickup events
are contiguous in this tour and all their delivery events are contiguous in this tour.
It is straightforward to extend the insertion and removal algorithms of Subsection~\ref{subsec:cheapest_insert} to sequences.

To evaluate the cost resulting from an insertion or removal, we need to run the scheduling algorithm from Subsection~\ref{subsec:scheduling}.
For inserting a segment into a tour we usually try several positions for the pickup sequence as well as for the delivery sequence.
If an insertion at a certain position is infeasible due to capacity constraints, incompatibility, or sorting rules, this is detected quickly.
The scheduling algorithm can determine  infeasibility due to time windows or exceeding the work time limit. 
Often our lower bounds can be used to detect infeasibility much faster even without composing ATFs and scheduling.

At any stage we maintain a feasible solution.
When we try inserting or removing a sequence of items, we do not update the main data structures until we know that we actually want to perform this change; see Section~\ref{subsec:cheapest_insert}.
Only a small percentage of the evaluated changes
will actually prove beneficial and not be discarded.

After significant changes to a tour, the tour sequence is re-optimized.
We first run the Lin-Kernighan heuristic \cite{LinKernighan} adapted to the Asymmetric Path TSP,
but only on segments in which reorderings that reduce the total travel time
are likely to maintain feasibility (with respect to sorting, capacity constraints, and the current schedule).
Next we run a (slower but accurate) dynamic programming algorithm similar to that proposed in \cite{lera2020dynamic},
but we randomly preselect a subset of (typically long) edges of the tour and force the dynamic program to keep the sequences
in between those edges contiguous.
Last, we run Helsgaun's LKH3 heuristic \cite{LKH3} adapted to our setting.

For optimizing a set of tours, we have implemented the following local search operations.
In all local search heuristics we use the friendship information to restrict the search space.

\emph{Segment swapping} operates on two tours, removes a sequence from each,
and inserts these sequences (possibly reversed) into the other tour. 
\emph{Matching} removes short (possibly empty) sequences from all tours and possibly takes a few items that are currently not served by any tour
(which can arise within the random walk framewalk to be described below),
and finds an optimal bipartite matching, assigning these sequences to the so-reduced tours (and possibly to ``outside any tour'').
\emph{Minimum mean cycle} considers a complete digraph whose vertices are the tours (or pieces of multi-trip tours in between two visits of a depot),
and in which the weight of the edge from tour A to tour B is the (possibly negative) cost of
removing a selected sequence from tour A and inserting it into tour B, ignoring feasibility constraints.
We find a cycle with minimum mean weight in this digraph. If its total weight is negative, we check whether the corresponding set of operations
is actually feasible and beneficial. If not, we increase the weight of an edge whose weight was too optimistic and iterate.
\emph{Chain opt} tries to insert items that currently do not belong to any tour, similarly to 
 \emph{ejection-chains} introduced by \cite{glover_chains} for the TSP and extended to vehicle routing problems in 
 \cite{rego_chains, sontrop_chains, soto_chains, AccorsiVigo:2020}.

\emph{Random walks} were introduced in \cite{applegate2003chained} to iterate local search for the TSP. We follow the approach described in \cite{AccorsiVigo:2020} for the vehicle routing problem: removing a random sequence from a tour,
re-inserting the items into other tours as described in Section~\ref{subsec:clustering} or in other orders as suggested in \cite{AccorsiVigo:2020},
and post-optimizing the involved tours using the above-described operations.
In a parallel branch, we also try to dissolve an entire tour and re-insert its items into other tours by the same technique and use 
ideas proposed in \cite{curtois_pdptw, sartori_pdptw} to estimate which items are probably difficult to insert. 
The two branches exchange their best found solutions regularly.

\section{Experimental results}\label{sec:experimental}

We implemented the described algorithms in C++ and integrated them into a
software package called BonnTour. To demonstrate the efficiency of BonnTour
we carried out extensive experiments.  First, we used existing
capacitated vehicle routing benchmarks without and with delivery time
windows (Section~\ref{sec:results-prevalent-cvrp}). Second, as large and realistic 
public benchmarks with time-dependent travel times were missing, we
created a new set of instances, which we make publicly available at
\url{https://gitlab.com/muelleratorunibonnde/vrptdt-benchmark}.

All our experiments were carried out on a 2-way AMD EPYC 7742 Linux server with 128 physical cores in total.
We ran multiple jobs simultaneously (each using up to 8 threads), but we never used more than 128 threads in total.

\subsection{Results on standard vehicle routing benchmarks}
\label{sec:results-prevalent-cvrp}

We tested BonnTour on the benchmarks for the capacitated vehicle
routing problem (\textsc{CVRP}) by Uchoa et al.~\cite{uchoa}, on the benchmarks
for the capacitated vehicle routing problem with time windows
(\textsc{CVRPTW}) by Gehring and Homberger~\cite{homberger},
and on the pickup-and-delivery problem with time windows (\textsc{PDPTW}) by Li and Lim~\cite{li2003metaheuristic}.
These are popular benchmark sets for state-of-the-art vehicle routing algorithms.
While there are 100 Uchoa et al.\ instances with sizes ascending from 100 to 1000 customers,
there are 300 Gehring and Homberger instances, with 60 instances in each of the groups with instance size 200, 400, 600, 800, and 1000.
All these instances have a single depot for pickup.
There are 354 Li and Lim instances ranging from 100 to 1000 tasks (each task is either a pickup or a delivery).\footnote{In their paper, Li and Lim only describe how they generated the instances with 100 tasks. All instances can be found on \url{https://www.sintef.no/projectweb/top/pdptw/}.} Each of the instances by Li and Lim is either based on one of the Gehring and Homberger instances, or on one of the 56 Solomon instances for capacitated vehicle routing with time windows with 100 customers each \cite{solomon}.
For the Uchoa et al.\ instances the objective is to minimize the total distance, whereas for the Gehring and Homberger instances and for the Li and Lim instances the main objective is to minimize the number of tours (used vehicles),
using the total distance only as second criterion.

These instances  have fixed  travel times between any pair of addresses.
Because BonnTour has the implementation overhead of maintaining time-dependent travel times and
handling many additional constraints, we cannot expect to outperform the best implementations for these special
instance types.
Here, we show that we are close to the best known solutions,  which
are reported  at \url{https://www.sintef.no/projectweb/top/vrptw/}\footnote{We checked all web addresses mentioned in this section on March 13, 2024.}
for the Gehring and Homberger instances, at \url{http://vrp.atd-lab.inf.puc-rio.br/index.php/en/}
for the Uchoa et al.\ instances, and at \url{https://www.sintef.no/projectweb/top/pdptw/} for the Li and Lim instances.
\begin{table}[htb]
        \begin{center}
        {\small
                \resizebox{\linewidth}{!}{
                        \begin{tabular}{|l r | r r | r@{} r r@{} r r | r@{} r r@{} r r|}
	\hline
	\multicolumn{2}{|l|}{Instance class} & \multicolumn{2}{c|}{Best known solution} & \multicolumn{5}{c|}{BonnTour default mode} & \multicolumn{5}{c|}{BonnTour high-effort mode}                                                                                                                                                                                                                                             \\
	\multicolumn{2}{|r|}{\#\,instances}  & tours                                    & distance                                   & tours                                          & \ (\%\,gap)              & distance             & \ (\%\,gap)              & $t$                  & tours           & \ (\%\,gap)              & distance             & \ (\%\,gap)              & $t$                                    \\
	\hline
	U (100--246)                         & 32                                       & 759                                        & \textbf{  1009991}                             & \REV{      777}          &                      & \REV{\textbf{  1019246}} & \REV{\textbf{(0.9)}} & \REV{      157} & \REV{      773}          &                      & \REV{\textbf{  1015816}} & \REV{\textbf{(0.6)}} & \REV{     1354} \\
	U (250--490)                         & 36                                       & 1567                                       & \textbf{  2040726}                             & \REV{     1594}          &                      & \REV{\textbf{  2070983}} & \REV{\textbf{(1.5)}} & \REV{      246} & \REV{     1587}          &                      & \REV{\textbf{  2059175}} & \REV{\textbf{(0.9)}} & \REV{     2051} \\
	U (501--1000)                        & 32                                       & 2683                                       & \textbf{  3259953}                             & \REV{     2755}          &                      & \REV{\textbf{  3316501}} & \REV{\textbf{(1.7)}} & \REV{      482} & \REV{     2744}          &                      & \REV{\textbf{  3296898}} & \REV{\textbf{(1.1)}} & \REV{     3292} \\
	GH (200)                             & 60                                       & \textbf{      694}                         & 168034                                         & \REV{\textbf{      694}} & \REV{\textbf{(0.0)}} & \REV{   173135}          &                      & \REV{      123} & \REV{\textbf{      694}} & \REV{\textbf{(0.0)}} & \REV{   170630}          &                      & \REV{     1004} \\
	GH (400)                             & 60                                       & \textbf{     1380}                         & 387654                                         & \REV{\textbf{     1386}} & \REV{\textbf{(0.4)}} & \REV{   404162}          &                      & \REV{      196} & \REV{\textbf{     1382}} & \REV{\textbf{(0.1)}} & \REV{   396773}          &                      & \REV{     1733} \\
	GH (600)                             & 60                                       & \textbf{     2065}                         & 784653                                         & \REV{\textbf{     2074}} & \REV{\textbf{(0.4)}} & \REV{   834728}          &                      & \REV{      331} & \REV{\textbf{     2068}} & \REV{\textbf{(0.1)}} & \REV{   811157}          &                      & \REV{     2714} \\
	GH (800)                             & 60                                       & \textbf{     2732}                         & \REV{1329290}                                  & \REV{\textbf{     2758}} & \REV{\textbf{(1.0)}} & \REV{  1392141}          &                      & \REV{      394} & \REV{\textbf{     2750}} & \REV{\textbf{(0.7)}} & \REV{  1360467}          &                      & \REV{     3402} \\
	GH (1000)                            & 60                                       & \textbf{     3416}                         & \REV{2019978}                                  & \REV{\textbf{     3435}} & \REV{\textbf{(0.6)}} & \REV{  2167893}          &                      & \REV{      494} & \REV{\textbf{     3423}} & \REV{\textbf{(0.2)}} & \REV{  2102217}          &                      & \REV{     3902} \\
	\REV{   LL (100)}                    & \REV{       56}                          & \REV{\textbf{      402}}                   & \REV{    58060}                                & \REV{\textbf{      402}} & \REV{\textbf{(0.0)}} & \REV{    58081}          &                      & \REV{       69} & \REV{\textbf{      402}} & \REV{\textbf{(0.0)}} & \REV{    58066}          &                      & \REV{      546} \\
	\REV{   LL (200)}                    & \REV{       60}                          & \REV{\textbf{      600}}                   & \REV{   183720}                                & \REV{\textbf{      608}} & \REV{\textbf{(1.3)}} & \REV{   183328}          &                      & \REV{      120} & \REV{\textbf{      602}} & \REV{\textbf{(0.3)}} & \REV{   184939}          &                      & \REV{      987} \\
	\REV{   LL (400)}                    & \REV{       60}                          & \REV{\textbf{     1129}}                   & \REV{   437440}                                & \REV{\textbf{     1160}} & \REV{\textbf{(2.7)}} & \REV{   451067}          &                      & \REV{      217} & \REV{\textbf{     1143}} & \REV{\textbf{(1.2)}} & \REV{   446268}          &                      & \REV{     1802} \\
	\REV{   LL (600)}                    & \REV{       60}                          & \REV{\textbf{     1619}}                   & \REV{   889271}                                & \REV{\textbf{     1673}} & \REV{\textbf{(3.3)}} & \REV{   909281}          &                      & \REV{      290} & \REV{\textbf{     1640}} & \REV{\textbf{(1.3)}} & \REV{   921799}          &                      & \REV{     2654} \\
	\REV{   LL (800)}                    & \REV{       60}                          & \REV{\textbf{     2102}}                   & \REV{  1483301}                                & \REV{\textbf{     2181}} & \REV{\textbf{(3.8)}} & \REV{  1522891}          &                      & \REV{      370} & \REV{\textbf{     2138}} & \REV{\textbf{(1.7)}} & \REV{  1524887}          &                      & \REV{     3567} \\
	\REV{  LL (1000)}                    & \REV{       58}                          & \REV{\textbf{     2562}}                   & \REV{  2184255}                                & \REV{\textbf{     2682}} & \REV{\textbf{(4.7)}} & \REV{  2244576}          &                      & \REV{      454} & \REV{\textbf{     2607}} & \REV{\textbf{(1.8)}} & \REV{  2259760}          &                      & \REV{     4517} \\
	\REV{       D (25)}                  & \REV{       56}                          &                                            & \textbf{\REV{   663546}}                       & \REV{      200}          &                      & \REV{\textbf{   664192}} & \REV{\textbf{(0.1)}} & \REV{       16} & \REV{      199}          &                      & \REV{\textbf{   664172}} & \REV{\textbf{(0.1)}} & \REV{      134} \\
	\REV{       D (50)}                  & \REV{       56}                          &                                            & \textbf{\REV{  1273212}}                       & \REV{      275}          &                      & \REV{\textbf{  1276041}} & \REV{\textbf{(0.2)}} & \REV{       43} & \REV{      274}          &                      & \REV{\textbf{  1274952}} & \REV{\textbf{(0.1)}} & \REV{      370} \\
	\REV{      D (100)}                  & \REV{       56}                          &                                            & \textbf{\REV{  2416331}}                       & \REV{      444}          &                      & \REV{\textbf{  2423347}} & \REV{\textbf{(0.3)}} & \REV{       99} & \REV{      429}          &                      & \REV{\textbf{  2420352}} & \REV{\textbf{(0.2)}} & \REV{      828} \\
	\hline
\end{tabular}

           }
          }
        \end{center}
        \caption{
                Relative gaps of the two modes of BonnTour compared to the best known solutions on popular benchmark instances for
                \textsc{CVRP} (U: Uchoa et al.~\cite{uchoa}, with a total of 100 instances ranging from 100 to 1000 customers),
                \textsc{CVRPTW} (GH: Gehring and Homberger~\cite{homberger}, with a total of 300 instances from 200 to 1000 customers),
                \textsc{PDPTW} (LL: Li and Lim~\cite{li2003metaheuristic}, with a total of 354 instances from 100 to 1000 tasks), and
                \textsc{TDVRPTW} (D: Dabia et al.~\cite{dabia2013branch}, with a total of 168 instances from 25 to 100 customers).
                The results for the main objective function is shown in bold. (For the \textsc{TDVRPTW} instances, the objective is to minimize the total working time, here shown as distance.) The average running time $t$ per instance is given in seconds.
                For the \textsc{TDVRPTW} instances, the number of tours are not reported in \cite{dabia2013branch, lera2018enhanced, LeraRomero2020, pan2021hybrid}.
                \label{tab:comparision_to_bks}
        }
\end{table}
We used two modes of BonnTour: the default mode that is also mostly used in practice and a high-effort mode with the
only difference that more random walk iterations are performed.
Table~\ref{tab:comparision_to_bks} shows results for different instance classes and sizes.
The two columns under ``Best known solution'' report the total number and total length of all tours of the best known solutions in that given class.
Then, we report our results both for the ``BonnTour default mode'' and the ``BonnTour high-effort mode''.
The results and the gaps of the primary objective to the best known solution are highlighted in bold.
The average running time $t$ per instance is given in seconds.

The results show that our algorithm produces good solutions in reasonable running time on these problems,
despite the overhead resulting from the more general implementation allowing time-dependent travel times.
The gap is consistently below 2\% in the high effort mode, and for instances with all pickups at a depot this
is even true in the much faster default mode.

Table~\ref{tab:comparision_to_bks} also contains results on the benchmarks for time-dependent capacitated vehicle routing with time windows by Dabia, R{\o}pke, Van Woensel and De Kok \cite{dabia2013branch}, which are generated from the Solomon instances \cite{solomon} by assigning three different types of speed profiles to the arcs. 
There are 168 instances ranging from from 25 to 100 customers. Solution values are reported in \cite{dabia2013branch,lera2018enhanced, LeraRomero2020, pan2021hybrid}. 
For nine instances with 100 customers each, we found a better solution than any of these papers\footnote{better solutions 
(instance/value): r112/17727.65, rc104/19256.52, c204/95608.05, r205/19314.75, r210/18520.02, r211/15389.66, rc205/23638.27, rc206/20607.01, rc208/15886.30}; 
the ``Best known solution'' value in Table~\ref{tab:comparision_to_bks} includes these new values.
The objective is to minimize the total working time, shown as distance in Table~\ref{tab:comparision_to_bks}.

\subsection{New benchmarks with time-dependent travel times}\label{sec:new-benchmarks}

To demonstrate the effectiveness of our algorithms and to foster future research on vehicle routing with time-dependent travel times,
we created a set of realistic instances with time-dependent travel times.
These are based on maps from OpenStreetMap (OSM, publicly available at \url{https://www.openstreetmap.org}) and speed
data that was published by Uber.
Our instances are available at \url{https://gitlab.com/muelleratorunibonnde/vrptdt-benchmark}.

We chose 10 cities from four sub-continents, for which we found a sufficient coverage of streets with Uber speed data.
We used data for the set $D$ of the ten Mondays from January 6 to March 9, 2020 (hence excluding effects of the Covid-19 pandemic on travel times).
For each road segment $r$ and each hour $h\in\{15,\ldots,21\}$,
let $D(r,h)\subseteq D$ be the set of these Mondays for which the data contains a speed profile for $r$ and $h$.
Then we defined speed values for $r$ and $h$ by computing
\begin{itemize}
  \item the average of the Uber speed data values provided for $r$ and $(d,h)$ for all days $d\in D(r,h)$, if $D(r,h) \not= \emptyset$;
  \item $p(h) \cdot f(r)$ otherwise, where $f(r)$ denotes the \emph{free-flow speed} on $r$ as defined below, and $p(h)$ the average fraction of free-flow speed during $[h, h+1)$, over all road segments fully covered by speed data within our time horizon.
\end{itemize}

Where possible we defined the free-flow speed $f(r)$ using the 85\,\% quantile speed values for $r$ provided by Uber in their Q1/2020 data set,
or alternatively using the OSM \texttt{maxspeed} value, or, as a last resort, based on the OSM \texttt{highway} attribute.

For each city we then chose a depot address, in most cases close to a major airport, 
and then defined three address sets with 2000, 1000, and 500 addresses.
We picked these addresses randomly from OSM nodes tagged as shops and not more than 500\,m away from a road segment covered by Uber speed data.
We computed (approximately) fastest time-dependent travel times for each
pair of start and end address as described in Section \ref{sec:preprocessing}.
The resulting matrix of arrival time functions is part of our published data.
We used the road network preprocessing described in Section~\ref{sec:preprocessing} to generate this data;  
now the road network is no longer needed to process our instances.

\begin{figure}[!t]
	\begin{subfigure}[b]{0.49\textwidth}
		\begin{center}
			\includegraphics[height=6.5cm]{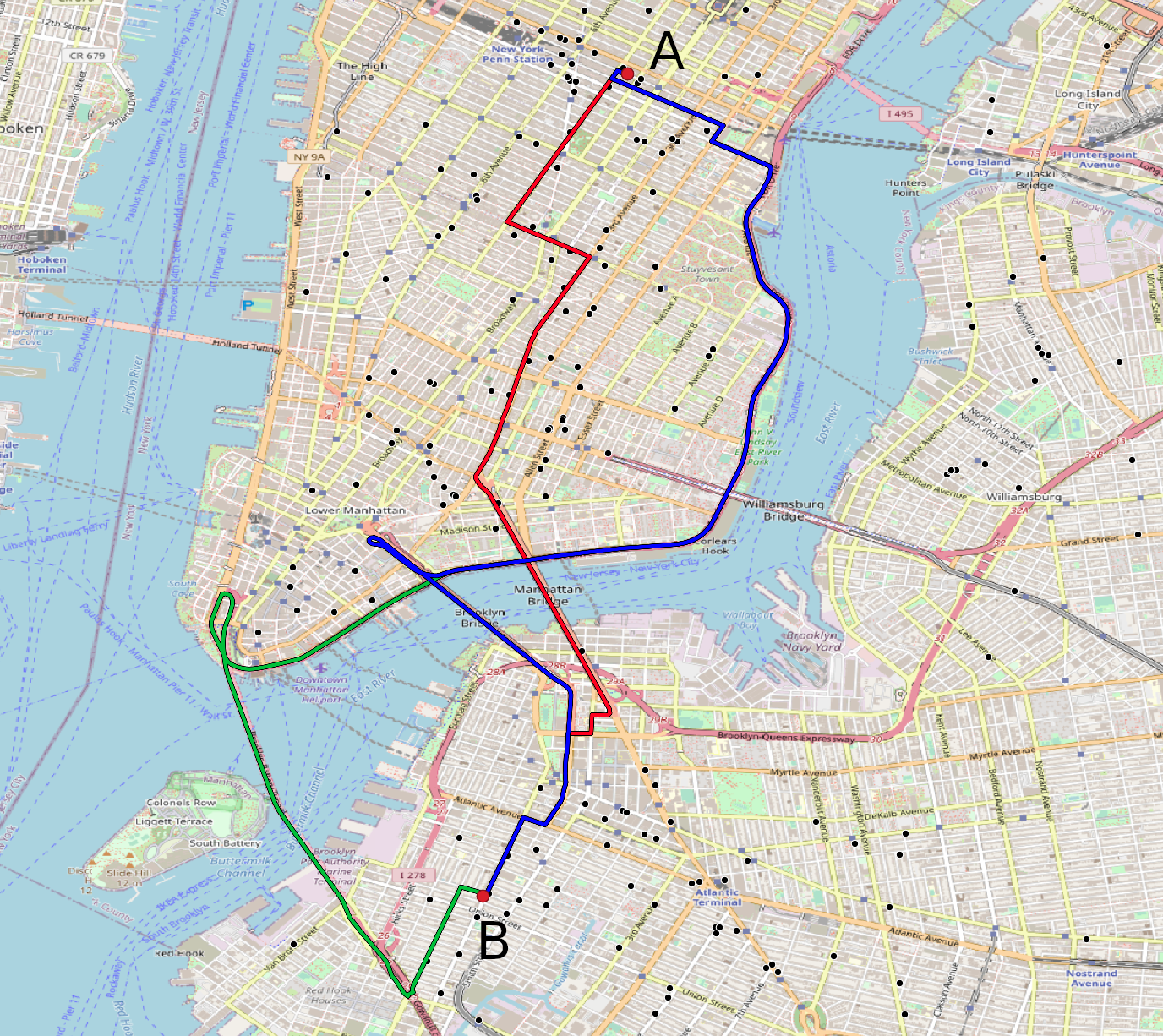}
		\end{center}
	\end{subfigure}
	\hfill \begin{subfigure}[b]{0.49\textwidth}
		\begin{center}
			\includegraphics[height=6.5cm]{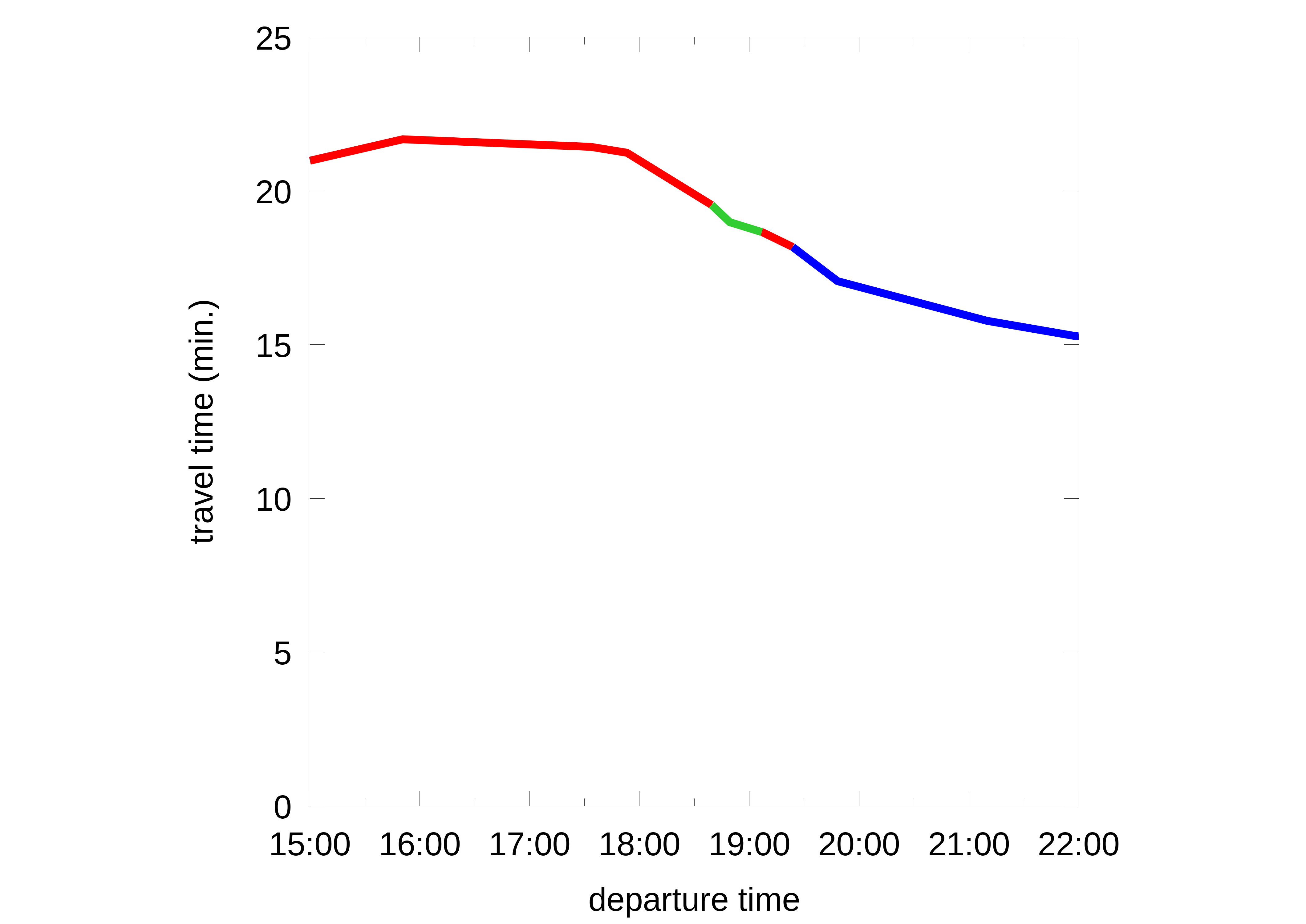}
		\end{center}
	\end{subfigure}
	\caption{The map on the left shows three paths in the 500-customer New York instance that are fastest at some point between 3\,pm and 10\,pm from some customer in Manhattan (A) to a customer in Brooklyn (B).
		On the right-hand side the corresponding travel time function is depicted. The segments are colored in the color of the path that is the fastest at this departure time.}
	\label{fig:alternative_fastest_paths}
\end{figure}

Figure~\ref{fig:alternative_fastest_paths} shows part of the addresses in the 500-customer New York instance.
Two addresses A and B are highlighted, and the paths from A to B that are the fastest path at some time between 15:00 and 22:00 are shown.
Figure~\ref{fig:alternative_fastest_paths} also shows the travel time function from A to B
(to obtain the arrival time function $a$ from the travel time function $f$, set $a(x):=x+f(x)$).

In all our instances there is only one type of vehicles. Each vehicle has a fixed cost of \$200 and unlimited capacity.
The objective is to minimize total fixed cost plus the working time cost, which we set to \$20 per hour.
The traveled distance is not part of our objective function. Nevertheless,
our published benchmark data also contains the length of each fastest path because it might be useful for future studies.

For each city and each address set we define two instances: one with all pickups at the depot, and one general pickup-and-delivery instance.
We first describe the instances with all pickups at the depot.
For each address $a$, we defined one item to be picked up at the depot at any time and delivered at $a$. Time windows for deliveries were chosen independently at random as follows. With 50\,\% probability we chose a one-hour time window beginning at a full or half hour chosen uniformly at random from
$\{\text{15:30},\text{16:00},\dots,\text{19:30},\text{20:00}\}$; the remaining items have a large delivery time window from 15:30 to 21:00.
We stipulate that each delivery takes three minutes and must start within the delivery time window, but
is allowed to extend past its end.
All tours must start at the depot no earlier than 15:00 and end at the depot at any time.
Pickup at the depot takes zero time, i.e., we assume pre-loaded vehicles.

Our general pickup-and-delivery instances are based on the same addresses and arrival time functions.
Here some of the addresses with a large time window are called \emph{shops}; the other addresses (except the depot) are called \emph{customers}. 
Now 80\,\% of the shipments are to be picked up at a shop and delivered to a customer, while 20\,\% of the shipments
are to be picked up at a customer and delivered to a shop. Details are described on the website containing the instances.

\subsection{Results on new benchmarks}
\label{sec:results-new-benchmarks}

\begin{table}[htb!]
        \begin{center}
                {\tiny
                        \resizebox{\linewidth}{!}{
                            \begin{tabular}{|l | r r r | r r r |}
\hline
Instance & \multicolumn{3}{c|}{BonnTour default mode} & \multicolumn{3}{c|}{BonnTour high-effort mode} \\
city \#\,customers & \#\,tours & cost & $t$ & \#\,tours & cost & $t$ \\
\hline\hline
        berlin\_500 & \REV{        9} & \REV{     2859} & \REV{      525} & \REV{        9} & \REV{     2827} & \REV{     3417} \\ 
    cincinnati\_500 & \REV{        9} & \REV{     2892} & \REV{      457} & \REV{        9} & \REV{     2907} & \REV{     2837} \\ 
          kyiv\_500 & \REV{       10} & \REV{     3170} & \REV{      576} & \REV{       10} & \REV{     3146} & \REV{     3592} \\ 
        london\_500 & \REV{       15} & \REV{     4847} & \REV{      424} & \REV{       15} & \REV{     4821} & \REV{     3569} \\ 
        madrid\_500 & \REV{       10} & \REV{     3202} & \REV{      507} & \REV{       10} & \REV{     3169} & \REV{     3562} \\ 
       nairobi\_500 & \REV{       10} & \REV{     3126} & \REV{      534} & \REV{       10} & \REV{     3113} & \REV{     3097} \\ 
     new\_york\_500 & \REV{       11} & \REV{     3432} & \REV{      504} & \REV{       11} & \REV{     3418} & \REV{     3434} \\ 
 san\_francisco\_500 & \REV{       12} & \REV{     3900} & \REV{      435} & \REV{       12} & \REV{     3854} & \REV{     2706} \\ 
    sao\_paulo\_500 & \REV{       13} & \REV{     4108} & \REV{      608} & \REV{       12} & \REV{     3910} & \REV{     3690} \\ 
       seattle\_500 & \REV{       10} & \REV{     3194} & \REV{      434} & \REV{       10} & \REV{     3179} & \REV{     3385} \\ 
\hline
 total & \REV{      109} & \REV{    34726} & \REV{     5004} & \REV{      108} & \REV{    34342} & \REV{    33289} \\ 
\hline\hline
        berlin\_1000 & \REV{       16} & \REV{     5027} & \REV{     1479} & \REV{       16} & \REV{     4981} & \REV{     6358} \\ 
    cincinnati\_1000 & \REV{       16} & \REV{     5114} & \REV{     1150} & \REV{       16} & \REV{     5059} & \REV{     6239} \\ 
          kyiv\_1000 & \REV{       18} & \REV{     5643} & \REV{     1281} & \REV{       17} & \REV{     5391} & \REV{     6770} \\ 
        london\_1000 & \REV{       25} & \REV{     8126} & \REV{      959} & \REV{       24} & \REV{     7871} & \REV{     6085} \\ 
        madrid\_1000 & \REV{       17} & \REV{     5437} & \REV{     1213} & \REV{       17} & \REV{     5421} & \REV{     6370} \\ 
       nairobi\_1000 & \REV{       17} & \REV{     5361} & \REV{     1368} & \REV{       17} & \REV{     5311} & \REV{     8297} \\ 
     new\_york\_1000 & \REV{       18} & \REV{     5700} & \REV{     1328} & \REV{       17} & \REV{     5455} & \REV{     6216} \\ 
 san\_francisco\_1000 & \REV{       20} & \REV{     6500} & \REV{      993} & \REV{       20} & \REV{     6456} & \REV{     6100} \\ 
    sao\_paulo\_1000 & \REV{       21} & \REV{     6802} & \REV{     1314} & \REV{       21} & \REV{     6753} & \REV{     8085} \\ 
       seattle\_1000 & \REV{       17} & \REV{     5397} & \REV{     1098} & \REV{       17} & \REV{     5389} & \REV{     6665} \\ 
\hline
 total & \REV{      185} & \REV{    59102} & \REV{    12183} & \REV{      182} & \REV{    58082} & \REV{    67186} \\ 
\hline\hline
        berlin\_2000 & \REV{       29} & \REV{     9077} & \REV{     3584} & \REV{       28} & \REV{     8850} & \REV{    13077} \\ 
    cincinnati\_2000 & \REV{       28} & \REV{     8955} & \REV{     2979} & \REV{       27} & \REV{     8723} & \REV{    12306} \\ 
          kyiv\_2000 & \REV{       31} & \REV{     9842} & \REV{     3379} & \REV{       30} & \REV{     9568} & \REV{    13559} \\ 
        london\_2000 & \REV{       42} & \REV{    13714} & \REV{     2501} & \REV{       40} & \REV{    13274} & \REV{    11882} \\ 
        madrid\_2000 & \REV{       30} & \REV{     9558} & \REV{     3009} & \REV{       29} & \REV{     9309} & \REV{    14270} \\ 
       nairobi\_2000 & \REV{       29} & \REV{     9110} & \REV{     3253} & \REV{       28} & \REV{     8888} & \REV{    16124} \\ 
     new\_york\_2000 & \REV{       31} & \REV{     9854} & \REV{     3220} & \REV{       30} & \REV{     9561} & \REV{    13851} \\ 
 san\_francisco\_2000 & \REV{       36} & \REV{    11533} & \REV{     2428} & \REV{       34} & \REV{    11022} & \REV{    12297} \\ 
    sao\_paulo\_2000 & \REV{       36} & \REV{    11493} & \REV{     3376} & \REV{       35} & \REV{    11259} & \REV{    16310} \\ 
       seattle\_2000 & \REV{       29} & \REV{     9283} & \REV{     2981} & \REV{       29} & \REV{     9180} & \REV{    13968} \\ 
\hline
 total & \REV{      321} & \REV{   102412} & \REV{    30711} & \REV{      310} & \REV{    99629} & \REV{   137643} \\ 
\hline
\end{tabular}
                         }
                }
        \end{center}
        \caption{Results of the two modes of BonnTour on our new benchmark instances with all pickups at a single depot. 
        In each row, the cost is rounded up to full dollars. The running time $t$ is given in seconds.
        \label{tab:results_new_benchmarks}}
\end{table}

\begin{table}[htb!]
   
        \begin{center}
                {\tiny
                        \resizebox{\linewidth}{!}{
                            \begin{tabular}{|l | r r r | r r r |}
\hline
Instance & \multicolumn{3}{c|}{BonnTour default mode} & \multicolumn{3}{c|}{BonnTour high-effort mode} \\
city \#\,customers & \#\,tours & cost & $t$ & \#\,tours & cost & $t$ \\
\hline\hline
        berlin\_500\_pdp & \REV{       15} & \REV{     4644} & \REV{      507} & \REV{       14} & \REV{     4404} & \REV{     3209} \\ 
    cincinnati\_500\_pdp & \REV{       15} & \REV{     4704} & \REV{      433} & \REV{       14} & \REV{     4495} & \REV{     2918} \\ 
          kyiv\_500\_pdp & \REV{       17} & \REV{     5327} & \REV{      470} & \REV{       16} & \REV{     4987} & \REV{     3044} \\ 
        london\_500\_pdp & \REV{       29} & \REV{     9471} & \REV{      498} & \REV{       28} & \REV{     9261} & \REV{     3196} \\ 
        madrid\_500\_pdp & \REV{       17} & \REV{     5311} & \REV{      636} & \REV{       16} & \REV{     5005} & \REV{     3857} \\ 
       nairobi\_500\_pdp & \REV{       17} & \REV{     5394} & \REV{      540} & \REV{       16} & \REV{     5046} & \REV{     3418} \\ 
     new\_york\_500\_pdp & \REV{       18} & \REV{     5702} & \REV{      556} & \REV{       17} & \REV{     5380} & \REV{     3647} \\ 
san\_francisco\_500\_pdp & \REV{       21} & \REV{     6670} & \REV{      500} & \REV{       19} & \REV{     6177} & \REV{     3182} \\ 
    sao\_paulo\_500\_pdp & \REV{       21} & \REV{     6650} & \REV{      658} & \REV{       20} & \REV{     6361} & \REV{     3981} \\ 
       seattle\_500\_pdp & \REV{       17} & \REV{     5418} & \REV{      492} & \REV{       16} & \REV{     5126} & \REV{     2920} \\ 
\hline
 total & \REV{      187} & \REV{    59288} & \REV{     5290} & \REV{      176} & \REV{    56237} & \REV{    33374} \\ 
\hline\hline
        berlin\_1000\_pdp & \REV{       27} & \REV{     8420} & \REV{     1264} & \REV{       26} & \REV{     8117} & \REV{     5758} \\ 
    cincinnati\_1000\_pdp & \REV{       27} & \REV{     8670} & \REV{      971} & \REV{       25} & \REV{     8048} & \REV{     4776} \\ 
          kyiv\_1000\_pdp & \REV{       34} & \REV{    10462} & \REV{     1049} & \REV{       30} & \REV{     9517} & \REV{     5590} \\ 
        london\_1000\_pdp & \REV{       54} & \REV{    17526} & \REV{     1146} & \REV{       51} & \REV{    16766} & \REV{     6082} \\ 
        madrid\_1000\_pdp & \REV{       32} & \REV{    10080} & \REV{     1333} & \REV{       30} & \REV{     9546} & \REV{     6194} \\ 
       nairobi\_1000\_pdp & \REV{       33} & \REV{    10367} & \REV{     1222} & \REV{       32} & \REV{    10067} & \REV{     6354} \\ 
     new\_york\_1000\_pdp & \REV{       33} & \REV{    10484} & \REV{     1260} & \REV{       32} & \REV{    10086} & \REV{     6051} \\ 
san\_francisco\_1000\_pdp & \REV{       38} & \REV{    12219} & \REV{     1157} & \REV{       35} & \REV{    11437} & \REV{     6226} \\ 
    sao\_paulo\_1000\_pdp & \REV{       41} & \REV{    13171} & \REV{     1349} & \REV{       38} & \REV{    12263} & \REV{     7467} \\ 
       seattle\_1000\_pdp & \REV{       30} & \REV{     9654} & \REV{     1008} & \REV{       29} & \REV{     9231} & \REV{     5987} \\ 
\hline
 total & \REV{      349} & \REV{   111049} & \REV{    11759} & \REV{      328} & \REV{   105074} & \REV{    60483} \\ 
\hline\hline
        berlin\_2000\_pdp & \REV{       54} & \REV{    17016} & \REV{     2887} & \REV{       51} & \REV{    16048} & \REV{    11711} \\ 
    cincinnati\_2000\_pdp & \REV{       54} & \REV{    17225} & \REV{     2569} & \REV{       50} & \REV{    15958} & \REV{    10441} \\ 
          kyiv\_2000\_pdp & \REV{       61} & \REV{    19362} & \REV{     3071} & \REV{       58} & \REV{    18398} & \REV{    10163} \\ 
        london\_2000\_pdp & \REV{      105} & \REV{    33530} & \REV{     2649} & \REV{       94} & \REV{    30870} & \REV{    13577} \\ 
        madrid\_2000\_pdp & \REV{       58} & \REV{    18548} & \REV{     2924} & \REV{       55} & \REV{    17506} & \REV{    11789} \\ 
       nairobi\_2000\_pdp & \REV{       63} & \REV{    19957} & \REV{     3033} & \REV{       60} & \REV{    19051} & \REV{    14007} \\ 
     new\_york\_2000\_pdp & \REV{       66} & \REV{    20880} & \REV{     3018} & \REV{       61} & \REV{    19410} & \REV{    14044} \\ 
san\_francisco\_2000\_pdp & \REV{       75} & \REV{    23985} & \REV{     2600} & \REV{       68} & \REV{    22048} & \REV{    12046} \\ 
    sao\_paulo\_2000\_pdp & \REV{       80} & \REV{    25401} & \REV{     3323} & \REV{       73} & \REV{    23651} & \REV{    14851} \\ 
       seattle\_2000\_pdp & \REV{       56} & \REV{    17813} & \REV{     2655} & \REV{       52} & \REV{    16642} & \REV{    10681} \\ 
\hline
 total & \REV{      672} & \REV{   213714} & \REV{    28730} & \REV{      622} & \REV{   199577} & \REV{   123310} \\ 
\hline
\end{tabular}
                         }
                }
        \end{center}
        \caption{Results of the two modes of BonnTour on our new pickup-and-delivery benchmark instances. 
        In each row, the cost is rounded up to full dollars. The running time $t$ is given in seconds.
        \label{tab:results_new_benchmarks_pd}}
   
\end{table}

In Tables~\ref{tab:results_new_benchmarks} and~\ref{tab:results_new_benchmarks_pd} we show the result of our algorithm
for the instances with 500, 1000, and 2000 customers in the 10 cities.
For each instance we show the number of tours and the total cost of our solution (computed by BonnTour default mode),
as well as the total running time of our algorithm.
We also show the corresponding data for the high-effort mode. 

As said in the beginning, most other vehicle routing algorithms still work with fixed travel times.
To assess the benefit of handling time-dependent travel times, we conducted several experiments where we tried to
compute good solutions when working only with fixed travel times.
To this end, we applied our algorithm in exactly the same way (in high-effort mode), but using constant travel times.
As travel times we took the worst case (in the domain of our ATFs), the average travel time, and the arithmetic mean of these two.
In the end we evaluated the resulting tours with respect to the time-dependent travel times.

\begin{table}[htb!]
        \begin{center}
                {\small
                        \resizebox{\linewidth}{!}{
                                \begin{tabular}{|l | r r@{} r| r r| }
 \multicolumn{6}{@{}l}{\REV{Results on the 30 new instances with all pickups at a single depot:}} \\
	\hline
	travel times                  & \#\,tours       & cost            & \ (\%\,gap)   & \#\,late deliveries & maximum delay (sec.) \\
	\hline
	time-dependent                & \REV{      600} & \REV{   192052} & \REV{ }       & 0                   & 0                    \\
	worst-case                    & \REV{      653} & \REV{   206877} & \REV{(7.7) }  & 0                   & 0                    \\
	50\,\% worst,\,50\,\% average & \REV{      629} & \REV{   199496} & \REV{(3.9) }  & \REV{      124}     & \REV{      197}      \\
	average                       & \REV{      600} & \REV{   191936} & \REV{(-0.1) } & \REV{     1656}     & \REV{      763}      \\
	\hline
 \multicolumn{6}{l}{} \\
 \multicolumn{6}{@{}l}{\REV{Results on the 30 new general pickup and delivery instances:}} \\
	\hline
	\REV{travel times}                  & \REV{\#\,tours} & \REV{cost}  & \ \REV{(\%\,gap)} & \REV{\#\,late pickups/deliveries} & \REV{maximum delay (sec.)} \\
	\hline
	\REV{time-dependent}                & \REV{     1126} & \REV{   360887} & \REV{ }       & \REV{        0}             & \REV{        0}      \\
	\REV{worst-case}                    & \REV{     1257} & \REV{   397555} & \REV{(10.2) } & \REV{        0}             & \REV{        0}      \\
	\REV{50\,\% worst,\,50\,\% average} & \REV{     1184} & \REV{   375862} & \REV{(4.1) }  & \REV{      406}             & \REV{      337}      \\
	\REV{average}                       & \REV{     1111} & \REV{   356550} & \REV{(-1.2) } & \REV{     3371}             & \REV{     1094}      \\
	\hline
\end{tabular}

                         }
                }
        \end{center}
        \caption{Results when planning with time-dependent and constant travel times. 
        Each row shows the total result summed over 30 new benchmark instances with 35000 tasks in total.
        For the single-depot instances (top), 17368 deliveries have a one-hour time window.
        For the pickup-and-delivery instances (bottom), 4638 pickups and 17051 deliveries have a one-hour time window.
        All results are evaluated with time-dependent travel times. 
        \label{tab:constant_tt}}
\end{table}

Table~\ref{tab:constant_tt} shows that working with worst-case travel times increases the cost by roughly 8--10\,\%.
Working with average travel times leads to similar total cost
but many unacceptable time window violations.
The arithmetic mean of worst-case and average travel times might be an acceptable compromise in practice,
but, as expected, the results are still much worse than using time-dependent travel times.

\begin{table}[htb!]
	\begin{minipage}{\linewidth}
        \begin{center}
                {\small
                        \resizebox{\linewidth}{!}{
                                \begin{tabular}{|l  l | r r@{} r | r r r r | }
 \multicolumn{6}{@{}l}{\REV{Results on the 30 new instances with all pickups at a single depot:}} \\
	\hline
	                     &                                 &                 &                 &               & \multicolumn{4}{c|}{\# deliveries with slack within:}                                                       \\
	travel times         & time windows                    & \#\,tours       & cost            & \ (\%\,gap)   & \REV{[10, 15)}                                        & \REV{[5, 10)}   & \REV{[0,5)}     & late            \\
	\hline
	time-dependent       & original                        & \REV{      600} & \REV{   192052} & \REV{ }       & \REV{     1691}                                       & \REV{     1721} & \REV{     2798} & 0               \\
	time-dependent       & soft reduction                  & \REV{      606} & \REV{   195038} & \REV{(1.6) }  & \REV{     1091}                                       & \REV{      792} & \REV{      263} & 0               \\
	\REV{time-dependent} & \REV{aggressive soft reduction} & \REV{      617} & \REV{   197856} & \REV{(3.0) }  & \REV{      691}                                       & \REV{      404} & \REV{       80} & 0               \\
	average              & original                        & \REV{      600} & \REV{   191936} & \REV{(-0.1) } & \REV{     1654}                                       & \REV{     1702} & \REV{     1860} & \REV{     1656} \\
	average              & reduced by 10 min               & \REV{      635} & \REV{   200744} & \REV{(4.5) }  & \REV{     2015}                                       & \REV{     1472} & \REV{      388} & \REV{       35$^*$} \\
	average              & soft reduction                  & \REV{      607} & \REV{   195292} & \REV{(1.7) }  & \REV{     1872}                                       & \REV{     1120} & \REV{      592} & \REV{      214} \\
	\REV{average}        & \REV{aggressive soft reduction} & \REV{      615} & \REV{   197606} & \REV{(2.9) }  & \REV{     1763}                                       & \REV{      873} & \REV{      373} & \REV{      157} \\
	\hline
 \multicolumn{6}{l}{} \\
 \multicolumn{6}{@{}l}{\REV{Results on the 30 new general pickup and delivery instances:}} \\
	\hline
	                     &                                 &                 &                 &               & \multicolumn{4}{c|}{\REV{\# pickups/deliveries with slack within:}}                                                       \\
	\REV{travel times}   & \REV{time windows}              & \REV{\#\,tours} & \REV{cost}  & \ \REV{(\%\,gap)} & \REV{[10, 15)}                                                & \REV{[5, 10)}   & \REV{[0,5)}     & \REV{late}      \\
	\hline
	\REV{time-dependent} & \REV{original}                  & \REV{     1126} & \REV{   360887} & \REV{ }       & \REV{     2326}                                               & \REV{     2561} & \REV{     4629} & \REV{        0} \\
	\REV{time-dependent} & \REV{soft reduction}            & \REV{     1131} & \REV{   364194} & \REV{(0.9) }  & \REV{     2035}                                               & \REV{     1818} & \REV{     1189} & \REV{        0} \\
	\REV{time-dependent} & \REV{aggressive soft reduction} & \REV{     1155} & \REV{   371201} & \REV{(2.9) }  & \REV{     1673}                                               & \REV{     1449} & \REV{      758} & \REV{        0} \\
	\REV{average}        & \REV{original}                  & \REV{     1111} & \REV{   356550} & \REV{(-1.2) } & \REV{     2381}                                               & \REV{     2538} & \REV{     2824} & \REV{     3371} \\
	\REV{average}        & \REV{reduced by 10 min}         & \REV{     1187} & \REV{   377126} & \REV{(4.5) }  & \REV{     2997}                                               & \REV{     2573} & \REV{     1235} & \REV{      604$^*$} \\
	\REV{average}        & \REV{soft reduction}            & \REV{     1131} & \REV{   363727} & \REV{(0.8) }  & \REV{     2881}                                               & \REV{     2134} & \REV{     1514} & \REV{     1154} \\
	\REV{average}        & \REV{aggressive soft reduction} & \REV{     1147} & \REV{   369559} & \REV{(2.4) }  & \REV{     3012}                                               & \REV{     2172} & \REV{     1378} & \REV{      878} \\
	\hline

\end{tabular}

                         }
                }                
        \end{center}
        \caption{Results when planning with the original hard time windows, soft time window reductions, or a hard reduction of every time window by 10 minutes. Each row shows the total result summed over 30 new benchmark instances with 35000 tasks in total. 
        For the single-depot instances (top), 17368 deliveries have a one-hour time window.
        For the pickup-and-delivery instances (bottom), 4638 pickups and 17051 deliveries have a one-hour time window.
        All results are evaluated with time-dependent travel times and the original hard time windows.
The costs do not include the artificial penalty for soft time window reductions.
The last four columns contain the number of deliveries with a certain slack until the deadline, e.g. $[10,15)$ contains the number of 
pickups/deliveries that happen at least 10 minutes but less than 15 minutes before the deadline.
{\boldmath $^*$}When planning with a hard reduction of every time window by 10 minutes, one customer in the single-depot S\~{a}o Paulo instances with 1000 and 2000 customers and one customer in the pickup-and-delivery London instance with 500 customers cannot be visited within its reduced delivery time window when starting at the depot at 3\,pm. 
Hence, this customer is not visited at all in the computed solution (and counted as late in the table), which is another drawback of this approach. 
\label{tab:stw}}
\end{minipage}
\end{table}

Another option to work with fixed travel times and still
meet most of the time windows, is to work with average travel times but reduce the length of the time windows upfront.
As Table~\ref{tab:stw} shows, a hard reduction of every time window by 10 minutes ends up with fewer time window violations,
but already leads to more than 4\,\% higher cost.

As described in Section~\ref{subsec:scheduling}, we can also handle soft time window reductions.
If we impose an artificial penalty of \$1 when arriving less than 15 but at least 10 minutes before the deadline,
\$2 within the next 5 minutes, and \$4 in the final 5 minutes 
and still strictly forbid arrivals after the deadline, the overall algorithm will avoid arrivals late in a time window
wherever this is possible at moderate cost. Table~\ref{tab:stw} shows the results: most arrivals at the end of the time window
can indeed be avoided at a moderate cost increase. 
If we double the artificial penalties (aggressive soft reduction in Table~\ref{tab:stw}), the effect is stronger as expected.
Soft time window reductions also help in combination with average travel times, but do not prevent late arrivals completely.
The combination of time-dependent travel times and soft time window reductions leads to schedules that are cheap and significantly more robust;
this is also illustrated by Figure~\ref{fig:slacks}.

\begin{figure}[htb!]
   \captionsetup[subfigure]{labelformat=empty}
	\begin{subfigure}[b]{0.49\textwidth}
		\begin{center}
			\includegraphics[height=5.5cm]{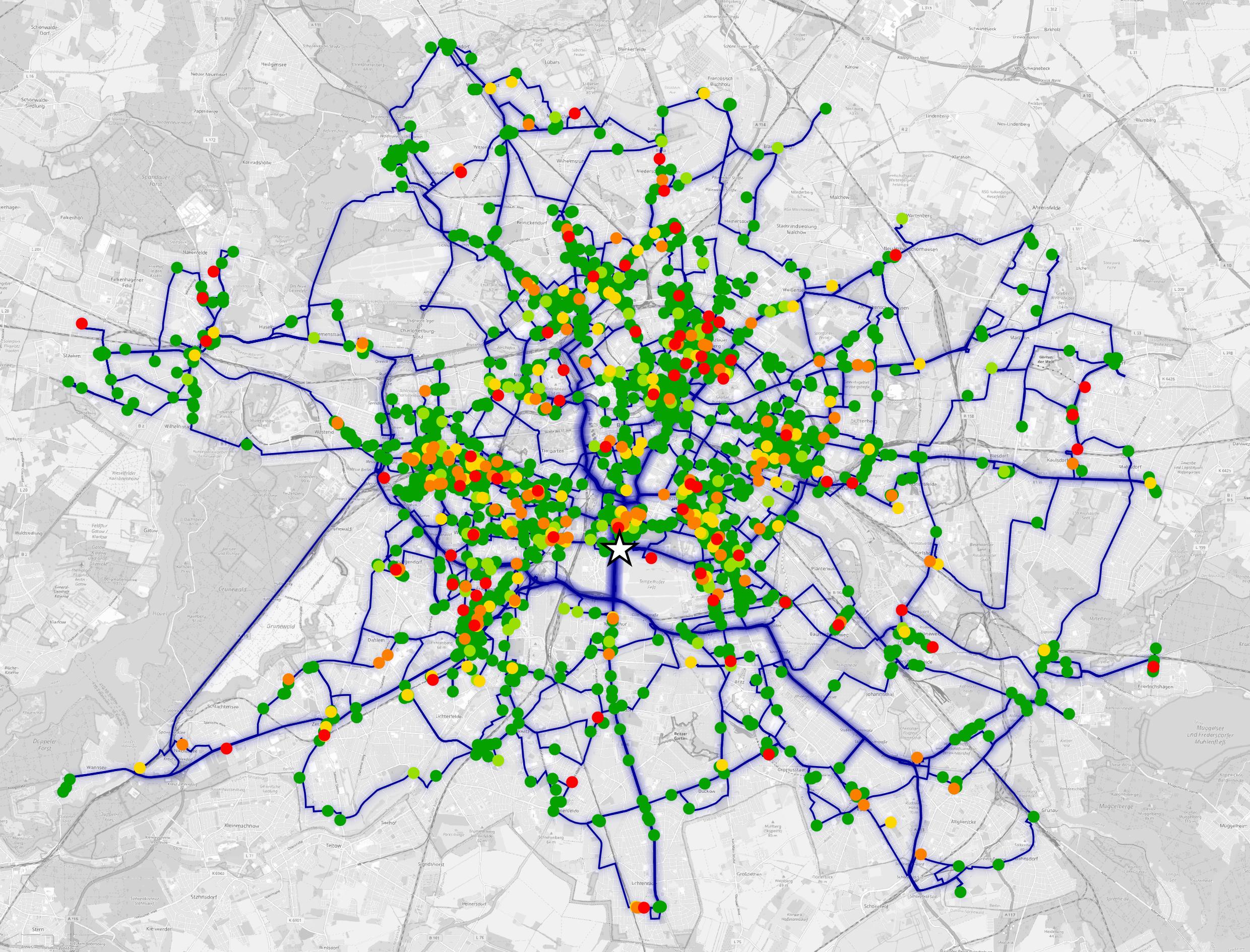}
		\end{center}
      \caption{{\bf a)} Constant average travel times: 28 tours, cost = 8859}
	\end{subfigure}
	\hfill \begin{subfigure}[b]{0.49\textwidth}
		\begin{center}
			\includegraphics[height=4cm]{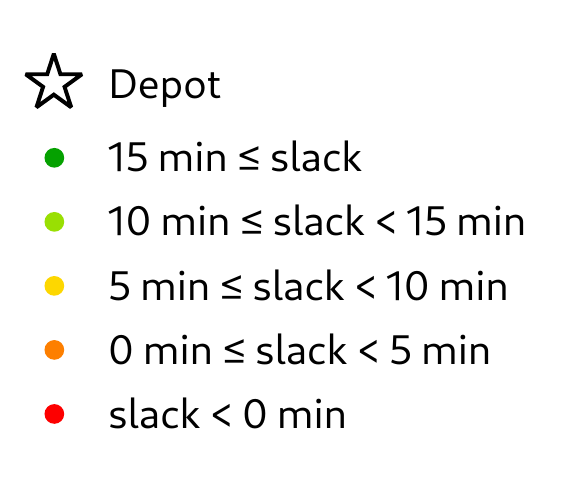}
		\end{center}
      \vspace{10mm}
	\end{subfigure}
	
	\vspace{0.75cm}
	\begin{subfigure}[b]{0.49\textwidth}
		\begin{center}
			\includegraphics[height=5.5cm]{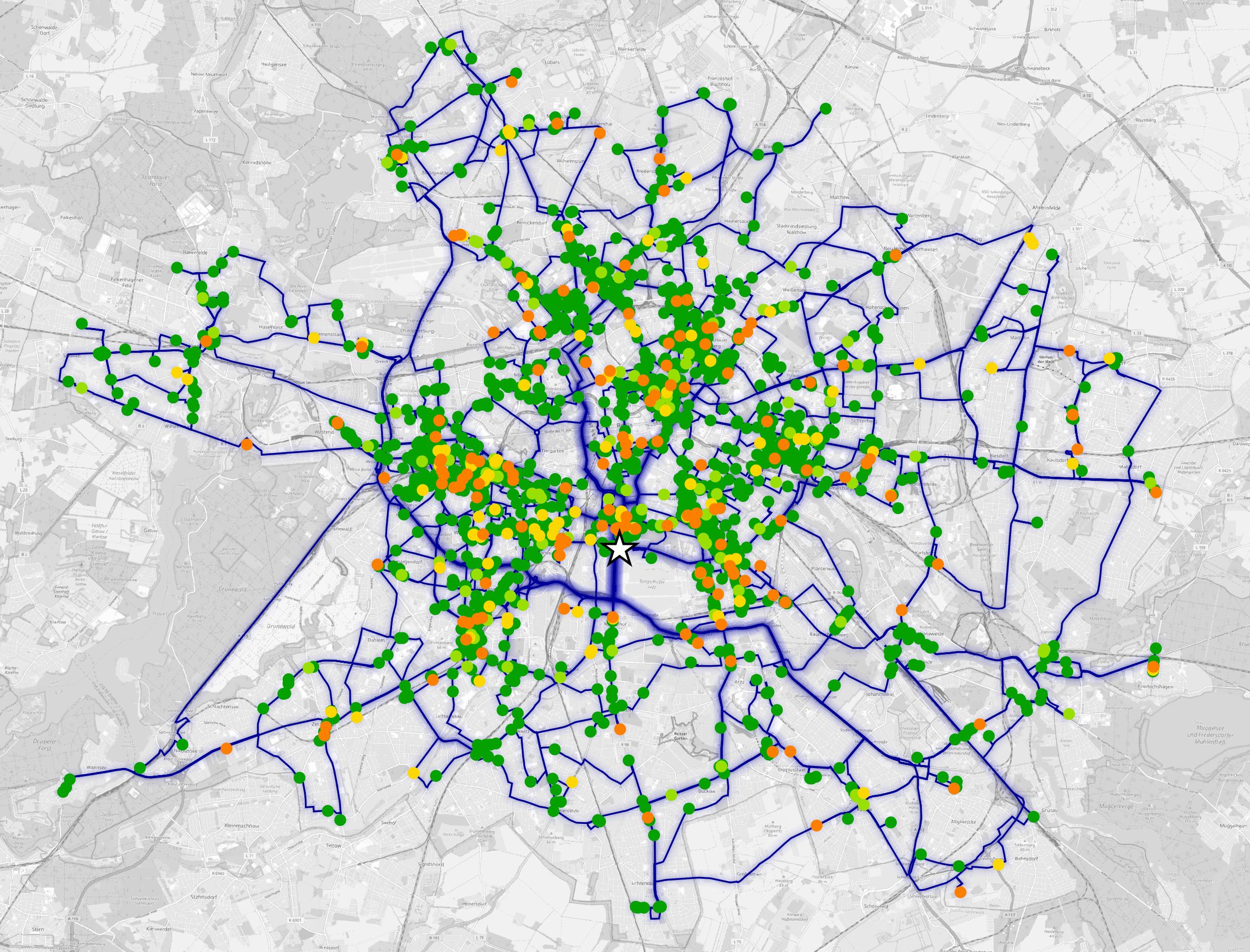}
		\end{center}
      \caption{{\bf b)} Time-dependent travel times: 28 tours, cost = 8850 \\ \mbox{}}
	\end{subfigure}
	\hfill \begin{subfigure}[b]{0.49\textwidth}
		\begin{center}
			\includegraphics[height=5.5cm]{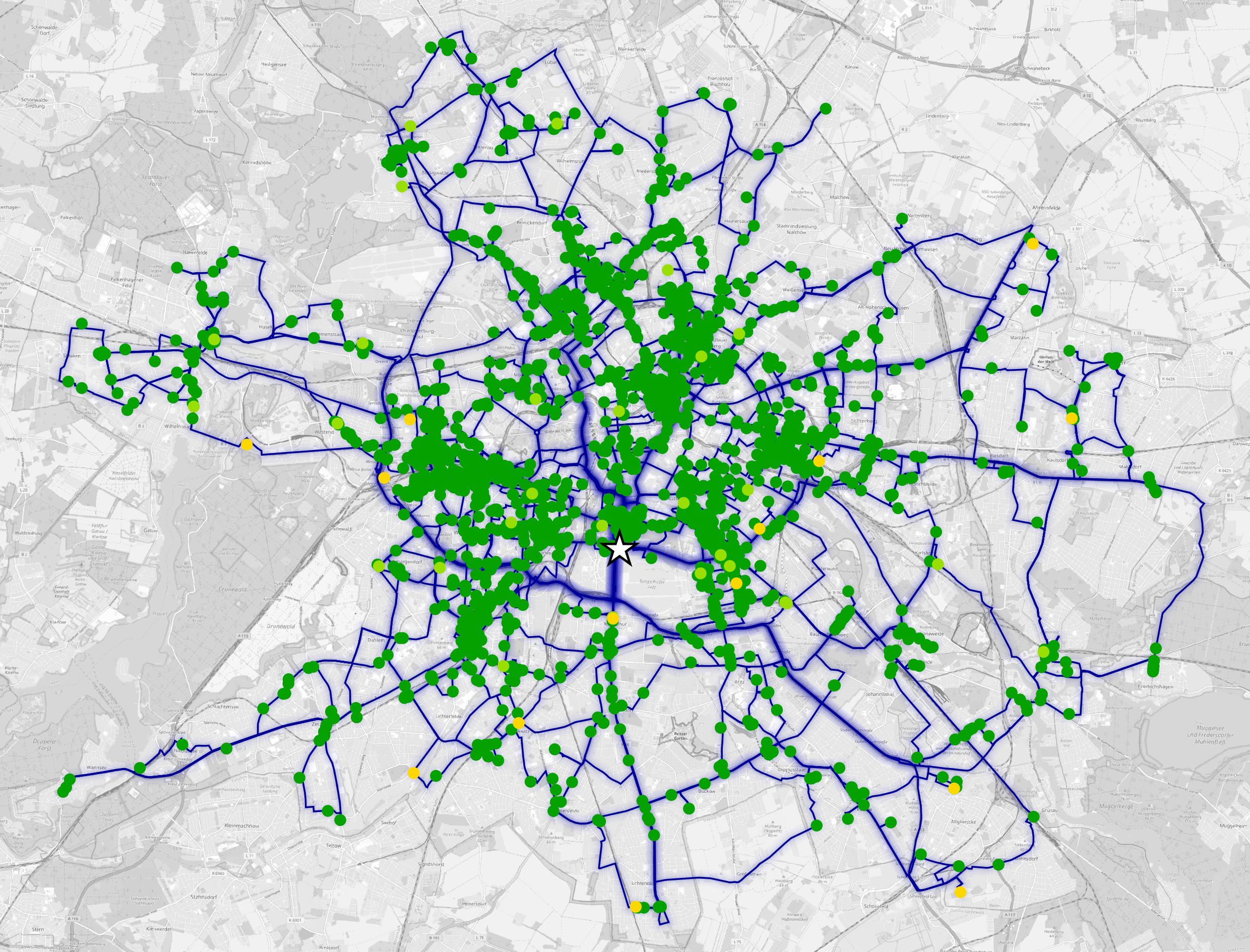}
		\end{center}
      \caption{{\bf c)} Time-dependent travel times and aggressive soft time window reduction: 29 tours, cost = 9149}
	\end{subfigure}

   \caption{Illustrating some results of Table~\ref{tab:stw}: three solutions obtained with high-effort mode on our Berlin benchmark instance with 2000 customers and all pickups at a single depot. Working with average travel times misses many time windows (red). Planning with time-dependent travel times and soft time window reductions largely avoids deliveries very late in the time window (yellow, orange).}
	\label{fig:slacks}
\end{figure}

\section{Conclusions}

We presented theoretical foundations and practical algorithms for vehicle routing with time-dependent travel times.
We have designed our algorithm for practical use, in cooperation with Greenplan, and it is already being used in practice in various scenarios.
We also constructed and published a set of realistic benchmark instances with time-dependent travel times and presented experimental results.
These demonstrate the benefit of working with time-dependent travel times.

\ifbool{journal}{
\clearpage
}{
\clearpage
\section*{Acknowledgement}
We thank all students who contributed to the implementation, in particular Luise Puhlmann and Silas Rathke.
We also thank our cooperation partner Greenplan, in particular Clemens Beckmann, Karin Pientka, and Jannik Silvanus.

We used map data copyrighted by OpenStreetMap contributors and available from \url{https://www.openstreetmap.org}, and
data retrieved from Uber Movement, \copyright\,2022 Uber Technologies, Inc., \url{https://movement.uber.com}.
}

 \bibliographystyle{acm}

\end{document}